\newacronym{rct}{RCT}{randomised controlled trial}
\newacronym{gp}{GP}{Gaussian process}
\newacronym{hte}{HTE}{heterogeneous treatment effect}
\newacronym{ate}{ATE}{Average Treatment Effect}
\newacronym{cate}{CATE}{conditional average treatment effect}
\newacronym{icm}{ICM}{Intrinsic Coregionalization Model}
\newacronym{rwd}{RWD}{Real-World Data}
\newacronym{cf}{CF}{Causal Forest}
\newacronym{rf}{RF}{Random Forest}
\newacronym{rmse}{RMSE}{Root Mean Squared Error}
\title[Causal-ICM]{Causal-ICM: A Data Fusion Framework For Heterogeneous Treatment Effect Estimation With Multi-Task Gaussian Processes}
\begin{document}

\maketitle

\begin{abstract}
  Bridging the gap between internal and external validity is crucial for heterogeneous treatment effect estimation. 
  \Glspl{rct}, favoured for their internal validity due to randomisation, often encounter challenges in generalising findings due to strict eligibility criteria. 
  Observational studies, on the other hand, may provide stronger external validity through larger and more representative samples but can suffer from compromised internal validity due to unmeasured confounding. 
  Motivated by these complementary characteristics, we propose a novel Bayesian nonparametric approach, \textit{Causal-ICM}, leveraging multi-task Gaussian processes to integrate data from both \glspl{rct} and observational studies. 
  In particular, we introduce a parameter that controls the degree of borrowing between the datasets and prevents the observational dataset from dominating the estimation. 
  We propose a data-adaptive procedure for choosing the optimal value of the parameter. 
  \textit{Causal-ICM} outperforms other data fusion methods in point estimation across the covariate support of the observational study and provides principled uncertainty quantification for the estimated treatment effects.
  We demonstrate the robust performance of \textit{Causal-ICM} in diverse scenarios through multiple simulation studies and a real-world study.
\end{abstract}

\begin{keywords}
  Data Fusion, Bayesian Nonparametrics, Multitask Gaussian Processes, Generalisability, Heterogeneous Treatment Effects
\end{keywords}

\section{Introduction}

Treatment effect estimation is an important task in many applications, including medicine, epidemiology, and the social sciences. The goal is to quantify the expected impact of a particular intervention in a given target population. The effect of an intervention within a population may vary systematically with respect to a particular set of covariates. Understanding such treatment effect heterogeneity is critical to ensure that appropriate decisions are taken for all individuals, not just those similar to the average \citep{brantner2023methods}. 
An appropriate characterisation of the uncertainty of treatment effect estimates is also required to support robust and reliable decision-making. 
Uncertainty quantification for heterogeneous treatment effects is challenging, however, especially when making inferences about individuals not represented in the study. Classical methods typically rely heavily on extrapolation without adequate variance inflation outside the study support \citep{degitar_review}.
\newline
\hspace*{2em}The gold standard for treatment effect estimation remains the \gls{rct}. The random treatment allocation in the study guards against the effect of confounding 
As a result, under standard identifiability assumptions, we can obtain an unbiased estimate of the treatment effect.
However, \glspl{rct} have important drawbacks \citep{frieden2017}, including high financial costs or the limited sample sizes. While often powered to detect the \gls{ate}, they are typically underpowered for subgroup effects. Furthermore, strict eligibility criteria and selection bias can limit representativeness of the target population \citep{nas2022}. \newline
\hspace*{2em}Observational studies offer an alternative source of data for treatment effect estimation, typically benefiting from larger sample sizes and better representativeness of the target population. However, they may be susceptible to unobserved confounding - unmeasured variables that affect both outcomes and treatment assignment - leading to biased treatment effect estimates. For example, a patient’s underlying health status might influence both the treatment they receive and their eventual outcome, yet remains unmeasured.
\newline
\hspace*{2em}The complementary nature of \gls{rct} and observational data indicates the potential benefits from combining the two sources to obtain better heterogeneous treatment effect estimates. Such data fusion approaches have received increased interest in recent years. A range of methods have been developed to integrate information from multiple sources of data, including a limited number targeting causal inference problems (\cite{colnet_causal_2023}, \cite{lin2024datafusionefficiencygain}). 
These typically rely on strong, untestable assumptions about the structure of the confounding effect and, moreover, are often not able to provide uncertainty quantification around the point estimates of heterogeneous treatment effects.
\newline
\hspace*{2em}In this paper, we introduce a Bayesian nonparametric approach to obtain both point and uncertainty estimates of heterogeneous treatment effects for the target population. Our approach is based on a multi-output \gls{gp}, a natural choice in the data fusion context as it enables joint modelling of the \gls{rct} and observational outcome regression functions, and uses the Bayesian machinery to share information and quantify uncertainty of the functions in regions of differing covariate support. Our key contributions are as follows:
\begin{itemize}[itemsep=-.5ex, topsep=1pt]
    \item We propose \textit{Causal-ICM}, a multi-task \gls{gp} model for heterogeneous treatment effect estimation that accurately captures complex functional relationships in the presence of unobserved confounding both within the support of the \gls{rct} covariate distribution and when extrapolating beyond it.
    
    \item \textit{Causal-ICM} provides principled uncertainty quantification across the full support of the target population. 
    Crucially, we theoretically show that \textit{Causal-ICM} limits the amount of information learnt from the observational dataset, and safeguards against overconfidence in the presence of bias.
    \item Through a comprehensive set of simulation studies and an application to a real-world dataset, we demonstrate that \textit{Causal-ICM} achieves similar or superior performance both in point estimation and uncertainty quantification relative to a broad set of state-of-the-art causal data fusion methods.
\end{itemize}

\section{Related Literature}

\Gls{hte} estimation has received renewed interest in recent years. The most common strategy is to focus on the \gls{cate} function, which quantifies the expected treatment effect given particular covariate values. Under suitable identifiability assumptions \citep{dahabreh_extending_2019}, the \gls{cate} can be estimated parametrically - e.g. using linear regression - or non-parametrically, e.g. using nearest-neighbour matching, kernel methods \citep{brantner2023methods}, or tree-based methods including causal random forests \citep{wager_estimation_2018} and their Bayesian counterpart \citep{hahn_bayesian_2020}. A related strand of work studies semi-parametric and debiased machine learning approaches that combine flexible outcome models with orthogonal score constructions to obtain robust estimates under high-dimensional confounding.

Data fusion across an \gls{rct} and an observational study has tended to focus on \gls{ate} estimation (see \cite{lin2024datafusionefficiencygain} and \cite{colnet_causal_2023} for comprehensive reviews). These methods aim to improve efficiency by integrating information from both sources. \cite{demirel2024predictionpoweredgeneralizationcausalinferences} focus instead on data fusion for generalisability, proposing a framework that builds an observational predictor and then treat the difference between the trial outcome function and that observational predictor as a bias function that captures both confounding and transportation bias due to the differences between the trial and target populations. Similar approaches, although outside the immediate scope of our work, have focused on data fusion for the estimation of long term effects, where randomised data are unconfounded but contain only short term effects, while observational data, although confounded, contain information about long term outcomes (\cite{ghassami2025combiningexperimentalobservationaldata}, \cite{imbens2024longtermcausalinferencepersistent}). 

Data fusion methods for \gls{cate} estimation, the focus of this paper, follow three broad lines of work. One strategy is to use experimental data to debias observational estimates by learning a correction or bias function. \cite{kallus_removing_2018} introduce a two-step approach that corrects hidden confounding in the absence of covariate overlap, relying on the strong assumption of linear confounding and the ability to identify this correction term parametrically. \citet{yang_improved_2022} relax this linearity assumption and provide identifiability and efficiency results under more flexible structural models. \cite{hatt2022combiningobservationalrandomizeddata} develop a representation-learning strategy that learns shared features and confounded outcome models from observational data, using the \gls{rct} to estimate a bias function that debiases these models, with finite-sample bounds that highlight how performance depends on sample size, distribution shift, and bias complexity. \cite{wu_integrative_2022} propose an R-learner that achieves consistency and asymptotic efficiency under covariate overlap between the \gls{rct} and the observational study. A second direction consists of ``test-then-pool'' strategies, in which the trial and observational datasets are combined only if the observational study appears sufficiently unbiased. For instance, \cite{Yang2023Elastic} implement such a strategy, although their approach is limited to linear treatment effect models and relies on strong parametric assumptions. A third line of work uses Bayesian dynamic borrowing to regulate how much information is extracted from the observational study. For instance, \cite{Lin2025} propose a Bayesian dynamic borrowing approach through a power likelihood to reduce the effect of confounding and control the degree of information we borrow from the observational study.

Our method, \textit{Causal-ICM}, takes a Bayesian dynamic borrowing approach based on multi-task \glspl{gp}, whereby each potential outcome is treated as a distinct task. Multi-task \glspl{gp} have been successfully employed for causal inference using solely observational data \citep{alaa_bayesian_2017} and extended to handle hierarchical hidden confounders \citep{witty2020}. Others have built on this framework to develop a general class of counterfactual multi-task deep kernel models that efficiently estimate causal effects and learn policies by stacking coregionalized \glspl{gp} and deep kernels \citep{caron_counterfactual_2022}. \glspl{gp} have also been employed as a matching tool for causal inference \citep{huang2023}. Recent work has also explored Gaussian-process-based partially linear models for heterogeneous treatment effects, combining parametric effect components with flexible GP nuisance estimation \citep{HoriiChikahara2024}. However, these approaches focus on single-source observational settings. In contrast, our work develops a multi-task \gls{gp} framework specifically designed for causal data fusion between experimental and observational sources. Our approach is most similar to concurrent work employing multi-task \glspl{gp} for pseudo-outcome regression to obtain error bounds on the observational bias term \citep{fawkes2025}.
\section{Methodology}

We begin with a brief introduction on data fusion for \gls{hte} estimation. We then introduce a multi-task \glspl{gp} framework tailored for this purpose. Throughout the paper, we use bold-face to denote vectors, e.g. $\mathbf{x}$, and capitals to denote matrices, e.g. $X$.

\subsection{Data Fusion for Heterogeneous Treatment Effect Estimation} \label{sec:fusion}

In this work, we focus on combining information from an observational dataset and a \gls{rct} (experimental) dataset. 
We begin with some notation. Let $S\in\{\text{o},\text{e}\}$ indicate the study, with ‘e’ denoting the experimental sample and ‘o’ the observational one. Let $A\in\{0,1\}$ be the binary treatment indicator. We observe the same baseline covariates in both studies, so each individual has $\mathbf{X}\in\mathcal{X}^S\subseteq\mathbb{R}^p$, where $\mathcal{X}^{\text{e}}$ and $\mathcal{X}^{\text{o}}$ overlap but $\mathcal{X}^{\text{o}}$ is not necessarily contained in $\mathcal{X}^{\text{e}}$. Our methodology requires overlap between the covariate distributions of the \gls{rct} and the observational study, but it does not require the support of the \gls{rct} covariates to be fully contained within that of the observational study. In regions with little or no randomized evidence, the observational study may still provide useful information for generalising treatment-effect estimates, although uncertainty should increase accordingly.

Without loss of generality, we distinguish three possible roles for baseline covariates: treatment-effect modifiers $\mathbf{X}_\tau \in \mathcal{X}_\tau^\text{e}$, confounders $\mathbf{X}_W \in \mathcal{X}_W^\text{e}$, and selection variables $\mathbf{X}_S \in \mathcal{X}_S^\text{e}$, such that $\mathbf{X} = \mathbf{X}_\tau \cup \mathbf{X}_W \cup \mathbf{X}_S$. Here, confounders are variables that act as common causes of treatment and outcome, treatment-effect modifiers are variables indexing the heterogeneity of interest, and selection variables are variables whose distribution may differ between the \gls{rct} and the target population. These roles are not assumed to be mutually exclusive. 
Finally, we observe a continuous outcome $Y\in\mathbb{R}$. The counterfactual outcome $Y_i^a$ denotes the value individual $i$ would have had under treatment $A=a$. 
The causal estimand of interest is the \Gls{cate} in the target population given the baseline covariates ${X}$, defined for $\mathbf{x}_\tau \in \mathcal{X}^{\text{e}} \cup \mathcal{X}^{\text{o}}$ as

\vspace{-1.5em}
\begin{equation*}
    \begin{split}
        \tau(\mathbf{x}_\tau) = \mathbb{E}\left[Y^{a=1} - Y^{a=0}|\mathbf{X}_\tau=\mathbf{x}_\tau\right]
    \end{split}
\end{equation*}
The difference in the conditional expectations of the outcomes in each study is denoted

\vspace{-1.5em}
\begin{align*}
\omega^s(\mathbf{x}_\tau) = \mathbb{E}\left[Y|\mathbf{X}_\tau=\mathbf{x}_\tau, A=1, S=s\right]
-\mathbb{E}\left[Y|\mathbf{X}_\tau=\mathbf{x}_\tau, A=0, S=s\right].
\end{align*}
\vspace{-1.5em}

The fundamental problem of causal inference is that for each unit we cannot observe both potential outcomes simultaneously. Hence, we make the following identifiability assumptions, which are standard in the literature (e.g. \cite{lanners2025datafusionpartialidentification}, \cite{Shi_Pan_Miao_2023}):
\begin{enumerate}[label=A\arabic*, itemsep=0pt, topsep=2pt]
    \item \label{assump:consistency} Consistency: $A=a \Rightarrow Y = Y^a$
    \item \label{assump:A_exchangeability} Mean conditional exchangeability over treatment for the treatment contrast given $\mathbf{X}_\tau$ in the \gls{rct}:
    \begin{equation*}
    \begin{aligned}
    \mathbb{E}[Y^{1}-Y^{0} \mid \mathbf{X}_{\tau}, S=\text{e}]
    &= 
    \mathbb{E}[Y^{1} \mid A=1, \mathbf{X}_{\tau}, S=\text{e}]  - \mathbb{E}[Y^{0} \mid A=0, \mathbf{X}_{\tau}, S=\text{e}] 
    \end{aligned}
\end{equation*}
    \item \label{assump:A_positivity} Positivity of treatment assignment in the \gls{rct}: $P(A=a|\mathbf{X}_S = \mathbf{x}_S, S=\text{e})>0$ for $\mathbf{x}_S \in \mathcal{X}_S^\text{e}$.
    \item \label{assump:S_exchangeability} Mean conditional exchangeability over selection  into the \gls{rct} for the treatment contrast given $\mathbf{X}_\tau$, $\mathbf{X}_S$:
$\mathbb{E}[Y^1 - Y^0 \mid \mathbf{X}_\tau \cup \mathbf{X}_S] = \mathbb{E}[Y^1 - Y^0 \mid \mathbf{X}_\tau \cup \mathbf{X}_S, S=\text{e}] {\color{green}}$
    \item \label{assump:S_positivity} Positivity of \gls{rct} participation: $P(S=\text{e}|\mathbf{X}_S = \mathbf{x}_S)>0$ for $\mathbf{x}_S \in \mathcal{X}^\text{e}_S $
    \item \label{assump: cond_invar_selection} Conditional invariance of selection variable distributions between the \gls{rct} and the target population: $P(X_S|X_\tau) = P(X_S|X_\tau, S=\text{e})$.
\end{enumerate}
Under assumptions \ref{assump:consistency} - \ref{assump: cond_invar_selection}
the \gls{cate} can be identified from the \gls{rct} data as
$\tau(\mathbf{x}_\tau) = \omega^{\text{e}}(\mathbf{x}_\tau)$
for $\mathbf{x}_\tau \in \mathcal{X}_\tau^{\text{e}}$.
The derivation is provided in Appendix~\ref{app:identification}, and a discussion on Assumptions \ref{assump:S_exchangeability} and \ref{assump: cond_invar_selection} is provided on Appendix \ref{app:disc_assump4}. For notational simplicity, and with a slight abuse of notation, we henceforth use $\mathbf{X}$ and $\mathbf{x}$ to denote whichever subset of covariates is relevant in context, including treatment-effect modifiers, confounders, or selection covariates.

Conversely,  relative to the \gls{rct}, the covariate distribution of the observational study may be more representative of the target population. In the presence of unmeasured confounding, however, 
it is not possible to identify $\tau(\mathbf{x})$ from the observational data alone. In particular, the hidden confounding effect in the observational data, denoted by $\eta(\mathbf{x}) = \tau(\mathbf{x}) - \omega^{\text{o}}(\mathbf{x})$, is nonzero. As a result, estimates of $\omega^{\text{o}}(\mathbf{x})$ may be biased for the estimand $\tau(\mathbf{x})$. 

In this work, we use a multi-task Gaussian process to jointly estimate $(\omega^{\text{o}}(\mathbf{x}), \omega^{\text{e}}(\mathbf{x}))$, treating the two data sources as separate tasks. This leverages both the unbiasedness of the \gls{rct} and the broader covariate support of the observational study, while adaptively accounting for confounding by comparing $(\omega^{\text{o}}(\mathbf{x}), \omega^{\text{e}}(\mathbf{x}))$ within $\mathcal{X}^{\text{e}}$. Unlike other methods, we avoid strong parametric assumptions about the bias function, relying instead on the multi-task \gls{gp} to extrapolate non-linearly with posterior uncertainty quantification, so that regions with low covariate support naturally exhibit higher uncertainty.



\subsection{Multi-task Gaussian Processes}
A \gls{gp} is a collection of random variables where any finite subset have a joint Gaussian distribution. In the scalar case, the distribution of a \gls{gp} $f$ is completely specified by its mean function $m\left(\mathbf{x}\right) = \mathbb{E}\left[f\left(\mathbf{x}\right)\right]$ and covariance function $k\left(\mathbf{x}, \mathbf{x}'\right) = \mathbb{E}\left[\left(f\left(\mathbf{x}\right) - m\left(\mathbf{x}\right)\right)\left(f\left(\mathbf{x}'\right) - m\left(\mathbf{x}'\right)\right)\right]$. We will assume that $m(\mathbf{x}) = 0$  throughout. Given a set of observations $(y_i, \mathbf{x}_i)_{i=1}^n$, we seek to learn a function $f$ such that $y_i = f(\mathbf{x}_i) + \epsilon_i$, where we assume $\epsilon_i {\sim} \mathcal{N}(0, \sigma^2)$. \gls{gp} regression proceeds by placing a \gls{gp} prior on the function $f$, and computing the posterior distribution of $f$ given the observations.

In our setting, we use a T-learner approach \citep{K_nzel_2019} to estimate $\tau(\mathbf{x})$, i.e. we individually estimate $\mathbb{E}[Y^0 \mid \mathbf{x}]$ and $\mathbb{E}[Y^1 \mid \mathbf{x}]$ with separate models. For brevity, we will focus on $\mathbb{E}[Y^1 \mid \mathbf{x}]$ in the exposition.
We write the study specific response surfaces as:

\vspace{-1.5em}
\begin{align*}
    f^{\text{e}}\left(\mathbf{x}\right) = \mathbb{E}[Y \mid \mathbf{x},S = \text{e},A = 1], \quad
    f^{\text{o}}\left(\mathbf{x}\right) = \mathbb{E}[Y \mid \mathbf{x},S = \text{o},A = 1].
\end{align*}
\vspace{-1.5em}

For $\mathbf{f}(\mathbf{x}) = (f^{\text{e}}\left(\mathbf{x}\right), f^{\text{o}}\left(\mathbf{x}\right))$, we assign a multi-task \gls{gp} prior 
$\textbf{f} \sim \mathcal{GP}\left(\boldsymbol{\mu}, K\right)$,
where $\boldsymbol{\mu}$ is a bivariate mean function (taken again to be 0) and $K$ is a positive $2 \times 2$ \textit{matrix-valued} covariance function that maps input points $\mathbf{x}_i$ and $\mathbf{x}_j$ to matrices quantifying the relationship of the outputs at these inputs. We specify $K$ in the next section.
The key strength of multi-task \gls{gp}s is the ability to share information between tasks, which we will leverage for data fusion. See Appendix \ref{app:ill_example} Fig \ref{Illustrative example} for an illustrative comparison to independent \gls{gp}s.

We assume a Gaussian observation model,
\begin{align*}
    y_i^{\text{e}}=  f^{\text{e}}\left(\mathbf{x}_i^{\text{e}}\right) + \epsilon_i^{\text{e}}, \quad
       y_j^{\text{o}}=  f^{\text{o}}\left(\mathbf{x}_j^{\text{o}}\right) + \epsilon_j^{\text{o}},
\end{align*}
where $\epsilon_i^{\text{e}}, \epsilon_j^{\text{o}} \sim \mathcal{N}(0,\sigma^2)$ independently across tasks and observations for $i = 1,\ldots,n^{\text{e}}$ and $j= 1,\ldots,n^{\text{o}}$. For simplicity, we assume a common variance $\sigma^2$ between tasks but this can be easily extended to task-specific variances. We write $\mathbf{y}^{\text{e}}$  for the $n^{\text{e}}$-vector of outcomes and $X^{\text{e}}$ for the  $(n^{\text{e}} \times p)$ matrix of covariates respectively, and similarly for the observational dataset.

In data fusion settings for clinical trials, typically $n^{\text{e}}\ll n^{\text{o}}$, and $p(\mathbf{x}^{\text{e}})$ may not have support over the entire target population. Making inferences across the whole population thus requires some level of extrapolation. 
We leverage the multi-task \gls{gp} to improve estimation of $f^{\text{e}}$ outside the support of $p(\mathbf{x}^{\text{e}})$, given the observations $(\mathbf{y}^{\text{o}}, X^{\text{o}})$. 
By sharing information between tasks and leveraging the flexible nature of \glspl{gp}, we hope to increase precision of our estimates without introducing significant bias. 
The object of interest is the posterior distribution of $f^{\text{e}}$ given $\mathcal{D} = (\mathbf{y}^{\text{e}}, X^{\text{e}},\mathbf{y}^{\text{o}}, X^{\text{o}})$, which, by conjugacy, is also a \gls{gp}.  Crucially, the Bayesian framework also provides reliable uncertainty estimates in regions of extrapolation.

\subsection{Intrinsic coregionalisation models (ICMs)}

The \gls{icm} is a special case of a multi-task \gls{gp} with a separable kernel, which is particularly interpretable and suitable for data fusion \citep{icm_vargas_warrick}.  
Each task is a linear combination of independent latent functions, yielding a particularly simple covariance structure. For our setting, we have

\vspace{-1.5em}
\begin{align*}
f^{\text{e}}\left(\mathbf{x}\right) = \sum_{q=1}^Q\alpha_q^{\text{e}} u_q\left(\mathbf{x}\right), \quad f^{\text{o}}\left(\mathbf{x}\right) = \sum_{q=1}^Q\alpha_q^{\text{o}} u_q\left(\mathbf{x}\right),
\end{align*}
where $u_q\left(\mathbf{x}\right)$ are independent zero-mean latent functions taken from the same scalar \gls{gp}, $u_q \sim \mathcal{GP}\left(0, k\right)$ for a chosen kernel $k$. The scalar coefficients $(\alpha_q^{\text{e}},\alpha_q^{\text{o}})$ are  which will be used to construct the coregionalization matrix defined shortly, which governs the dependence structure. The rank, $Q$, of the \gls{icm} is the number of independent components, which we set to $Q = 2$ for the single arm case; we provide a discussion on the interpretation of this later. 

The multi-task covariance function can be expressed as
$$
K(\mathbf{x},\mathbf{x}') = B \otimes k(\mathbf{x},\mathbf{x}'),
$$
where $\otimes$ is the Kronecker product and $B$ is the coregionalization matrix taking values
\begin{align}\label{eq:coreg}
    B &= \begin{bmatrix}
        \beta^{\text{e}} & \beta^{\text{eo}}\\
         \beta^{\text{eo}} &   \beta^{\text{o}}
\end{bmatrix} 
=
    \begin{bmatrix}
        \left(\alpha_1^{\text{e}}\right)^2 +        \left(\alpha_2^\text{e}\right)^2 &
        \alpha_1^{\text{e}}\alpha_1^{\text{o}} + \alpha_2^{\text{e}}\alpha_2^{\text{o}}\\[0.5mm]
\alpha_1^{\text{e}}\alpha_1^{\text{o}} + \alpha_2^{\text{e}}\alpha_2^{\text{o}} &
        \left(\alpha_1^{\text{o}}\right)^2 +        \left(\alpha_2^\text{o}\right)^2
\end{bmatrix}.
\end{align}
The posterior distribution over $f^{\text{e}}$ can then be computed using standard theory (see Appendix \ref{app:posterior}). For a given test point $\mathbf{x}_*$, we have that the posterior distribution is Gaussian:
\begin{align*}
    \left[f^{\text{e}}(\mathbf{x}_*) \mid \mathcal{D}\right] &\sim \mathcal{N}(m^{\text{e}}(\mathbf{x}_*), V^{\text{e}}\left(\mathbf{x}_*\right)),\\
 m^\text{e}(\mathbf{x}_*) = k^{\text{e}}\left(\mathbf{x}_*, X\right)\Sigma^{-1}\mathbf{y}, \quad 
 V^{\text{e}}\left(\mathbf{x}_*\right) &= \beta^{\text{e}}{k}\left(\mathbf{x}_*,\mathbf{x}_*\right) - k^{\text{e}}\left(\mathbf{x}_*, X\right)\Sigma^{-1}k^{\text{e}}\left(X,\mathbf{x}_*\right).
\end{align*}
Here, $\mathbf{y} = (\mathbf{y}^{\text{e}},\mathbf{y}^{\text{o}})$ and $X = (X^{\text{e}},X^{\text{o}})$ are the concatenated response vectors and covariate matrices respectively, and  $\Sigma = K\left(X,X\right)+\sigma^2 I$  with
\begin{align*}
     K(X,X) &= \begin{bmatrix}
       \beta^{\text{e}}K(X^{\text{e}},X^{\text{e}}) &\beta^{\text{eo}}K(X^{\text{e}},X^{\text{o}})\\
       \beta^{\text{eo}}K(X^{\text{o}},X^{\text{e}}) &\beta^{\text{o}}K(X^{\text{o}},X^{\text{o}})
    \end{bmatrix},
\end{align*}
where $K(X^{s}, X^{t})$ is the regular (cross-)covariance matrix of the kernel $k(\mathbf{x},\mathbf{x}')$ for the covariate matrices from $S = s$ and $S = t$.
We similarly have
\begin{align*}
    k^{\text{e}}\left(\mathbf{x}_*, X\right) = \begin{bmatrix}\beta^{\text{e}} k(\mathbf{x}_*,X^{\text{e}})& \beta^{\text{eo}}k(\mathbf{x}_*,X^{\text{o}})\end{bmatrix},
\end{align*}
where $k(\mathbf{x}_*, X^s)$ is the row vector of covariances between $\mathbf{x}_*$ and $X^s$, and $k^{\text{e}}\left( X,\mathbf{x}_*\right) = k^{\text{e}}(\mathbf{x}_*,X)^T$. Given this posterior, $m^{\text{e}}(\mathbf{x}_*)$ provides point estimates of the response surface, with associated posterior variance $V^{\text{e}}(\mathbf{x}_*)$ which can be used to compute credible intervals.


\subsection{\textit{Causal-ICM}}

We now introduce \textit{Causal-ICM}. Specifically, we propose an interpretable parametrisation of the \gls{icm} for causal inference and introduce a bespoke procedure for learning its coefficients, rather than relying on standard marginal likelihood maximisation. This serves to control the influence of the observational dataset and avoid a situation in which a large confounded observational sample yields biased estimates with artificially narrow uncertainty intervals relative to the \gls{rct}. \textit{Causal-ICM} consists of the following choices for the ICM coefficients:
\begin{align*}
    \begin{bmatrix}
    \alpha_1^{\text{e}} &  \alpha_2^{\text{e}}\\
     \alpha_1^{\text{o}} &  \alpha_2^{\text{o}}
    \end{bmatrix} =     \begin{bmatrix}
    1 & 0\\
     \rho &  \sqrt{1-\rho^2}
    \end{bmatrix}
\end{align*}
Under this parameterization, $\rho \in (0,1)$ controls the degree of borrowing, with $\rho \to 1$ corresponding to maximum borrowing and $\rho \to 0$ to none (We provide an additional interpretation of $\rho$ in Appendix \ref{app:interpretation_rho}, Proposition \ref{prop:rho_1}). Although this implies $\beta^{\text{e}} = \beta^{\text{o}} = 1$ (and $\beta_{\text{eo}} = \rho$), the model remains flexible as long as the kernel $k(\mathbf{x},\mathbf{x})$ includes an independent scaling term (e.g., the variance in an RBF kernel) and the scales of $f^{\text{e}}$ and $f^{\text{o}}$ are similar.

With these suggested settings for the ICM coefficients, we then have:
\begin{align*}
f^{\text{e}}(\mathbf{x}) = u_1(\mathbf{x}), \quad f^{\text{o}}(\mathbf{x}) = \rho f^{\text{e}}(\mathbf{x})  + \sqrt{1-\rho^2}u_2(\mathbf{x}).
\end{align*}
Thus, $f^{\text{e}} = u_1$ and $f^{\text{o}}$ is a scaled version of $f^{\text{e}}$ plus a term capturing confounding. Setting $\alpha_2^{\text{e}}=0$ means that the experimental regression surface is modelled directly as a single \gls{gp}, while the observational regression surface decomposes into a shared component and an observational-specific component. Choosing $\rho \in (0,1)$ is reasonable, as $f^{\text{o}}(\mathbf{x})$ is expected to be positively correlated with $f^{\text{e}}(\mathbf{x})$, both model the conditional expectation of the outcome, with $f^{\text{o}}$ additionally contaminated by confounding. In practice, restricting $\alpha_2^{\text{e}}$ minimally affects flexibility while improving interpretability. Remaining hyperparameters, including kernel parameters and sampling variance, are estimated via marginal likelihood maximisation.

\subsubsection{Variance bound for conditional mean function under \textit{Causal-ICM}}

When evaluating treatment effects, it is important to quantify uncertainty alongside point estimates.  
The multi-task Gaussian process provides this directly via the posterior variance $V^{\text{e}}(\mathbf{x}_*)$. In data fusion, a key concern is that large observational samples may lead to biased estimates with unrealistically narrow uncertainty intervals when unmeasured confounding is present. \textit{Causal-ICM} mitigates this, as shown by a lower bound on the posterior variance $V^{\text{e}}(\mathbf{x}_*)$ even as $n^\text{o}$ grows. 
This variance result can be viewed as formally limiting the information flow from the observational dataset, similar to cut models \citep{Lin2025}.

\begin{proposition}\label{prop:v_bound}
    Suppose $\mathbf{x}_*$ is a test point of interest. Let $V^{\emph{e}}(\mathbf{x}_*)$ and ${V}_{\mathcal{D}_{\emph{e}}}^{\emph{e}}(\mathbf{x}_*)$ denote the posterior variances of $f^{\emph{e}}(\mathbf{x}_*)$ given the full dataset $\mathcal{D}=(\mathbf{y}^{\emph{e}}, X^{\emph{e}},\mathbf{y}^{\emph{o}}, X^{\emph{o}})$ and the experimental dataset $\mathcal{D}_{\emph{e}} = (\mathbf{y}^{\emph{e}}, X^{\emph{e}})$ only respectively.
    The posterior variance given $\mathcal{D}$ then satisfies
    \begin{align}
V^{\emph{e}}\left(\mathbf{x}_*\right) \geq (1-\rho^2) \, V^{\emph{e}}_{\mathcal{D}_{\emph{e}}}\left(\mathbf{x}_*\right)
    \end{align}
    where $\rho \in (0,1)$ is the borrowing hyperparameter.
\end{proposition}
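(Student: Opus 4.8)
The plan is to recast the statement as a single-output problem for the latent function $u_1 = f^{\text{e}}$ and then prove an \emph{exact} algebraic identity for $V^{\text{e}}(\mathbf{x}_*)$, from which the inequality is immediate. Writing $A = K(X^{\text{e}},X^{\text{e}}) + \sigma^2 I$, both quantities are posterior variances of $u_1(\mathbf{x}_*)$: we have $V^{\text{e}}_{\mathcal{D}_{\text{e}}}(\mathbf{x}_*) = k(\mathbf{x}_*,\mathbf{x}_*) - k(\mathbf{x}_*,X^{\text{e}})A^{-1}k(X^{\text{e}},\mathbf{x}_*)$, while $V^{\text{e}}(\mathbf{x}_*) = k(\mathbf{x}_*,\mathbf{x}_*) - k^{\text{e}}(\mathbf{x}_*,X)\Sigma^{-1}k^{\text{e}}(X,\mathbf{x}_*)$. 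Here I use crucially that under \textit{Causal-ICM} one has $\beta^{\text{e}} = 1$ and $\beta^{\text{eo}} = \rho$, so that $k^{\text{e}}(X,\mathbf{x}_*) = \bigl(k(X^{\text{e}},\mathbf{x}_*)^T,\ \rho\, k(X^{\text{o}},\mathbf{x}_*)^T\bigr)^T$ and the observational diagonal block of $\Sigma$ is exactly $K(X^{\text{o}},X^{\text{o}}) + \sigma^2 I$.

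First I would invert $\Sigma$ by a Schur complement with respect to its experimental block $A$. Introducing $\mathbf{d} := k(X^{\text{o}},\mathbf{x}_*) - K(X^{\text{o}},X^{\text{e}})A^{-1}k(X^{\text{e}},\mathbf{x}_*)$ and the Schur complement $S := K(X^{\text{o}},X^{\text{o}}) + \sigma^2 I - \rho^2\, K(X^{\text{o}},X^{\text{e}})A^{-1}K(X^{\text{e}},X^{\text{o}})$, the off-diagonal factors of $\rho$ in $\Sigma$ and in $k^{\text{e}}(X,\mathbf{x}_*)$ combine so that, after cancellation, the quadratic form collapses into a single perfect square. I expect to obtain the clean identity $V^{\text{e}}(\mathbf{x}_*) = V^{\text{e}}_{\mathcal{D}_{\text{e}}}(\mathbf{x}_*) - \rho^2\, \mathbf{d}^T S^{-1} \mathbf{d}$. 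Since $S$ is positive definite (a Schur complement of the positive definite matrix $\Sigma$), the subtracted term is nonnegative, and the stated bound is therefore equivalent to the single scalar inequality $\mathbf{d}^T S^{-1}\mathbf{d} \le V^{\text{e}}_{\mathcal{D}_{\text{e}}}(\mathbf{x}_*)$.

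To finish I would give $\mathbf{d}$ and $S$ their probabilistic meaning under the experimental-only posterior. Let $\Psi_{\text{oo}} := K(X^{\text{o}},X^{\text{o}}) - K(X^{\text{o}},X^{\text{e}})A^{-1}K(X^{\text{e}},X^{\text{o}})$ be the experimental-posterior covariance of $u_1(X^{\text{o}})$; then $\mathbf{d}$ is the experimental-posterior cross-covariance between $u_1(X^{\text{o}})$ and $u_1(\mathbf{x}_*)$, and $V^{\text{e}}_{\mathcal{D}_{\text{e}}}(\mathbf{x}_*)$ is the experimental-posterior variance of $u_1(\mathbf{x}_*)$. Two observations close the argument. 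First, $S - (\Psi_{\text{oo}} + \sigma^2 I) = (1-\rho^2)\,K(X^{\text{o}},X^{\text{e}})A^{-1}K(X^{\text{e}},X^{\text{o}})$ is positive semidefinite, so $S^{-1} \preceq (\Psi_{\text{oo}} + \sigma^2 I)^{-1}$ in the Loewner order and hence $\mathbf{d}^T S^{-1}\mathbf{d} \le \mathbf{d}^T(\Psi_{\text{oo}} + \sigma^2 I)^{-1}\mathbf{d}$. Second, the experimental-posterior covariance matrix of $(u_1(\mathbf{x}_*), u_1(X^{\text{o}}))$ with $\sigma^2 I$ added to its observational block remains positive semidefinite, and its Schur complement gives $V^{\text{e}}_{\mathcal{D}_{\text{e}}}(\mathbf{x}_*) - \mathbf{d}^T(\Psi_{\text{oo}} + \sigma^2 I)^{-1}\mathbf{d} \ge 0$ (equivalently, this is the nonnegative variance of $u_1(\mathbf{x}_*)$ after further conditioning on noisy observations of $u_1(X^{\text{o}})$). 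Chaining the two inequalities yields $\mathbf{d}^T S^{-1}\mathbf{d} \le V^{\text{e}}_{\mathcal{D}_{\text{e}}}(\mathbf{x}_*)$, which is exactly what is needed.

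The main obstacle I anticipate is purely the bookkeeping in the block inversion: one must track the $\rho$ factors carefully and verify that the cross terms assemble into the square $\mathbf{d}^T S^{-1}\mathbf{d}$ rather than a messier form. The conceptual content is light once the identity $V^{\text{e}} = V^{\text{e}}_{\mathcal{D}_{\text{e}}} - \rho^2\, \mathbf{d}^T S^{-1}\mathbf{d}$ is in hand, since the remaining steps are standard positive-semidefiniteness and Schur-complement facts; the role of the confounding component $\sqrt{1-\rho^2}\,u_2$ is precisely to make $S$ strictly larger than $\Psi_{\text{oo}} + \sigma^2 I$, which is what prevents the variance reduction from exceeding the $\rho^2$-fraction and keeps the lower bound bounded away from zero for all $\rho \in (0,1)$ irrespective of $n^{\text{o}}$.
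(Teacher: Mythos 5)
Your proposal is correct and follows essentially the same route as the paper: both derive the exact identity $V^{\text{e}}(\mathbf{x}_*) = V^{\text{e}}_{\mathcal{D}_{\text{e}}}(\mathbf{x}_*) - \rho^2\,\mathbf{d}^T S^{-1}\mathbf{d}$ (the paper via sequential conditioning on $\mathcal{D}_{\text{e}}$ then $\mathcal{D}_{\text{o}}$, you via a one-shot Schur-complement block inversion of $\Sigma$) and then bound the quadratic form by comparing $S$ with $\Psi_{\text{oo}}+\sigma^2 I$ and recognising the residual as a nonnegative posterior variance of the kernel $k-k'$. Your use of Loewner monotonicity of the matrix inverse in place of the paper's explicit Woodbury split is an equivalent, slightly cleaner packaging of the same PSD fact.
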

\begin{proof}
 Let $n^{\textnormal{e}}, n^{\textnormal{o}} \geq 0$ denote the sizes of the experimental and observational datasets respectively. We outline the proof for $n^{\textrm{e}}= 0$ here,  which corresponds to the prior-only case for the RCT. The proof for the general case $n^{\textrm{e}}> 0$ is deferred to Appendix \ref{app: proof_prop}. The key is that when  $n^{\textrm{e}}= 0$, the posterior variance can be written as
    \begin{align*}
    V^{\textrm{e}}\left(\mathbf{x}_*\right)  = (1-\rho^2)  \, k(\mathbf{x}_*,\mathbf{x}_*) + \rho^2V^{\textnormal{e}}_{\mathcal{D}_{\textnormal{o}}}\left(\mathbf{x}_*\right)
    \end{align*}
    where $V^{\textnormal{e}}_{\mathcal{D}_{\textnormal{o}}}$ is the posterior variance for a \gls{gp} with kernel $k(\mathbf{x},\mathbf{x})$ fit only to $\mathcal{D}_{\textnormal{o}}=(\mathbf{y}^{\textnormal{o}}, X^{\textnormal{o}})$, and $k(\mathbf{x}_*,\mathbf{x}_*)$ is the prior variance. As $V^{\textnormal{e}}_{\mathcal{D}_{\textnormal{o}}}\left(\mathbf{x}_*\right) \geq 0$, we obtain the desired lower bound.
\end{proof}
Proposition~\ref{prop:v_bound} shows that the Causal-ICM posterior variance is lower bounded by a non-degenerate fraction of the experimental-only posterior variance. The same decomposition also yields an upper bound. Writing
\[
V_D^{\mathrm{e}}(x^*) = V_{D_{\mathrm{e}}}^{\mathrm{e}}(x^*) - \rho^2\bigl(k(x^*,X^{\mathrm{o}})-k'(x^*,X^{\mathrm{o}})\bigr)\Sigma_{eo}^{-1}\bigl(k(X^{\mathrm{o}},x^*)-k'(X^{\mathrm{o}},x^*)\bigr),
\]
and noting that the quadratic form is non-negative, we obtain
\[
(1-\rho^2)V_{D_{\mathrm{e}}}^{\mathrm{e}}(x^*) \leq V_D^{\mathrm{e}}(x^*) \leq V_{D_{\mathrm{e}}}^{\mathrm{e}}(x^*).
\]
Hence, incorporating observational data cannot inflate the posterior variance beyond the \gls{rct}-only baseline, nor arbitrarily concentrate the posterior when $\rho\in(0,1)$.

An intuitive understanding of the result is as follows: for two random variables with correlation $\rho$, knowledge of one random variable does not inform us of the exact value of the other unless $\rho = 1$. The key interpretation of Proposition~\ref{prop:v_bound} in the data fusion context is as follows: even as $n^{\text{o}} \to \infty$, the posterior variance of $f^{\text{e}}(\mathbf{x}_*)$ conditional on the full dataset will at most decrease by a factor of $(1-\rho^2)$ relative to the posterior variance given the experimental data only, where $|\rho| \leq 1$.
We conjecture that this inequality is tight as $n^{\text{o}} \to \infty$. This can be interpreted as limiting the `effective sample size' of the observational dataset to be $1/\left(1- \rho^2 \right)$ relative to the experimental dataset. Crucially, this variance bound protects from having very tight credible intervals centred around a biased estimate - the posterior uncertainty will accurately reflect our distrust of the observational dataset.
We also expect this `effective sample size' effect to hold for the posterior mean of $f^{\text{e}}(\mathbf{x}_*)$, where there the bias introduced to the posterior mean by $\mathcal{D}_{\text{o}}$ is limited again by the choice of $\rho$. We leave this interesting direction for future work.  Although our focus is on the conditional mean functions, our approach also provides the posterior distribution of the confounding function, offering additional insights that may be of independent interest (see Appendix \ref{app:confound_fun}).


\subsubsection{Tuning $\rho$}\label{sec:choosing_rho}
We now outline a data-adaptive procedure to select $\rho$, where the goal is to  prevent information from the large confounded observational study from swamping that of the smaller unconfounded RCT. Our proposal is based on
cross-validation to minimize the \gls{rmse} on weighted held-out RCT patients only, where the weighting tailors $\rho$ for extrapolation. Specifically, we use 5-fold cross-validation, as the \gls{rct} sample size is usually small, and we consider $\rho \in [0.0,0.1, \cdots, 1.0]$ on a grid. Our objective is:
\begin{align*}
   \mathcal{L}(\rho)=  \sum_{i = 1}^{n_{\text{held-out}}} w(\tilde{\mathbf{x}}_i)\left(\tilde{y}_i - m^{\text{e}}(\tilde{\mathbf{x}}_i)\right)^2, \quad 
    w(\mathbf{x}) =
 \frac{1}{1-p(S = \text{o} \mid \mathbf{x})}
\end{align*}
where $\{\tilde{y}_i,\tilde{\mathbf{x}}_i\}_{i=1}^{n_{\text{held-out}}}$ is a held-out subset of $(\mathbf{y}^{\textnormal{e}}, X^{\textnormal{e}})$ that was not used to train $m^{\text{e}}$. 
The above objective is motivated as follows. Firstly, we only evaluate predictions on RCT patients where there is no confounding present. Secondly, the weights will upweight \gls{rct} patients who have higher probability of arising from the observational dataset, so  $\rho$ will be tailored to improve extrapolation beyond the \gls{rct} support. We then expect the largest gains in power of $\rho$ is chosen close to 1, indicating that the confounding effect is small. We will see in the experiments that this works well in practice.


\subsubsection{Extension to \gls{cate} Estimation}\label{sec:multi-arm}
When both treatment and control arms are available, we fit two separate rank-2 \glspl{icm} to model $(f_0^{\text{e}}, f_0^{\text{o}})$ and $(f_1^{\text{e}}, f_1^{\text{o}})$, focusing on sharing information between the experimental and observational datasets. This yields independent posteriors for $f_0^{\text{e}}$ and $f_1^{\text{e}}$, which we combine to obtain the posterior of $\tau(\mathbf{x})$ by differencing the means and summing the variances. We allow distinct kernel hyperparameters for the two \glspl{icm} while using a common $\rho$, chosen to minimise the average $\mathcal{L}(\rho)$ across treatment groups. This setup corresponds to a T-learner approach \citep{K_nzel_2019}, and avoids imposing strong assumptions on the bias structure across treatment arms and preserves theoretical tractability. An alternative approach would be to also couple across treatment arms; however, in preliminary experiments not presented here, we found that such strongly coupled models resulted in over-shrinkage of posterior variance and reduced coverage.


\section{Experiments}

We investigated \textit{Causal-ICM}'s performance through multiple simulation studies and a comprehensive analysis of \gls{rwd} sourced from the Tennessee STAR study \citep{star_data}. The code is available online\footnote{https://github.com/EvanDimitriou/CausalICM}.
The optimal value of $\rho$ is chosen data-adaptively in all cases (Section \ref{sec:choosing_rho}). Model evaluation and comparison were based on the \gls{rmse}, averaged over the covariate distribution of the observational study. 

We compared \textit{Causal-ICM} against several benchmarks: (i) a T-learner with \glspl{gp} regressors trained separately on each study, (ii) the two-step method of \cite{kallus_removing_2018}, which debiases observational data via a low-complexity bias function with \glspl{gp} as base learners, (iii) the integrative estimator of \cite{yang_improved_2022}, which assumes a non-linear \gls{cate} and linear confounding effect, (iv) the power likelihood method of \cite{Lin2025}, and (v) the test-then-pool approach of \cite{Yang2023Elastic}. For \cite{yang_improved_2022}, we used the default settings recommended in their paper, while for \cite{Lin2025}, we selected the power parameter $\eta$ by maximizing the Expected Log Pointwise Predictive Density (ELPD), which in both cases resulted in $\eta=0$ (no pooling). Finally, we assessed the coverage of \textit{Causal-ICM}'s \gls{cate} credible intervals across simulation scenarios, comparing them with those from the T-learner approaches.


We used \citet{gpy2014} to train the \glspl{icm}, with $\rho$ tuned as described in Section~\ref{sec:choosing_rho} and remaining hyperparameters estimated via marginal likelihood maximisation. All experiments were conducted on a MacBook Pro (2022) equipped with an Apple M2 chip, 8 CPU cores, 10 GPU cores, and 16 GB of memory, running macOS 26.2. Runtime comparisons indicate that \textit{Causal-ICM} incurs a moderate computational overhead relative to single-study GP baselines, but remains competitive with other data-fusion approaches (see Appendix~\ref{app:runtime}).

\subsection{Simulation Studies}
We showcase the performance of \textit{Causal-ICM} in the main text using two distinct univariate simulation studies. Multivariate simulation studies, are detailed in Appendix \ref{app:multi_simulations}, illustrating the competitive performance of \textit{Causal-ICM} in higher dimensions. In all simulations, the sample size is roughly between 200-300 for the \gls{rct}  and 1000 for the observational study.

In the first simulation setting, we assume a continuous baseline covariate $X\sim \text{Unif}\left[-2,2\right]$. The values of $X$ control the trial participation probability, where $S\sim \text{Bernoulli}\left(p_S\right)$ for $p_S = {\text{exp}\left(-3-3X\right)}/{(1+\text{exp}\left(-3-3X\right)})$. Here, patients with smaller $X$ values have higher probability of being assigned in the trial. Trial participants are randomly assigned treatment with $A \sim \text{Bernoulli}\left(0.5\right)$. The observed outcomes are generated as $Y = A\tau\left(X\right) + X +\epsilon$ for $A\in \{0,1\}$, where $\tau\left(X\right) = 1 + X$ is the \gls{cate} and $\epsilon \sim \mathcal{N}\left(0,1\right)$. For the observational study we have $X \sim \text{Unif}\left[-2,2\right]$, with the treatment generated according to  $A \sim \text{Bernoulli} \left(e\left(X\right)\right)$ where $\text{logit}\left(e\left(X\right)\right) = -X$. The observed outcomes are $Y = A\tau\left(X\right) + X + U +\epsilon$ for $A\in\{0,1\}$. Here, the hidden confounder is generated from $U \sim \mathcal{N}\left(\left(2A-1\right)X, 1\right)$, yielding a linear confounding effect $\eta(X) = 2X$.
In the second simulation, we have non-linear potential outcomes, \gls{cate} and confounding effect, as follows: $Y = A\tau(X) + X^2 - 1 +U + \epsilon$, where $\tau(X) = 1 + X + X^2$ and $U \sim \mathcal{N}((2A-1)\sin(X-1), 1)$, yielding $\eta(X) = 2\text{sin}(X-1)$. All other variables remain the same.

Under a linear \gls{cate} and a linear unobserved confounding effect, \textit{Causal-ICM} with $\rho=0.1$ exhibited comparable performance to the integrative HTE (\cite{yang_improved_2022}) and Elastic HTE estimators (\cite{Yang2023Elastic}), and outperformed all other techniques (Fig~\ref{model_comp} left). In the non-linear setting (both \gls{cate} and unobserved confounding effect), \textit{Causal-ICM} outperformed all other methods yielding the lowest \gls{rmse} (Fig~\ref{model_comp} right). These results reflect the limitations of the experimental grounding, the Integrative HTE and the Elastic HTE methods. Each of these are designed for linear treatment or confounding effect, resulting in inflated \gls{rmse} values in scenarios where these assumptions do not hold. 


Regarding uncertainty quantification, our approach generated 95\% credible intervals with reasonable, though slightly conservative, coverage close to the nominal level in both simulation settings (Fig.~\ref{coverage}). This conservativeness is consistent with the structure of the model: posterior uncertainty reflects not only sampling variability, but also uncertainty induced by partial cross-domain borrowing and the possibility of residual observational bias. The observational-only T-learner had the lowest coverage across the baseline covariate support, which is expected in the presence of unmeasured confounding: as sample size grows, its estimates can remain biased while posterior uncertainty shrinks, leading to overconfident credible intervals that fail to cover the true \gls{cate}. This is exactly the behaviour that Causal-ICM is robust to, as formalized in Proposition~\ref{prop:v_bound}: the large sample size of the observational study does not overwhelm the \textit{Causal-ICM} estimates. See Figure~\ref{extrapolationPlot} in Appendix~\ref{app:extrapolationPlot} for an illustrative example of extrapolation and uncertainty quantification. 
Additional sensitivity analysis regarding imbalance of study sample sizes, performance within and outside the \gls{rct} support, kernel choice and the degree of overlap between the two studies and the choice of $\rho$ is provided in Appendix \ref{app: sensitivity}.

\vspace{-0.5em}
\begin{figure*}[h]
  \centering
  \includegraphics[width=\textwidth]{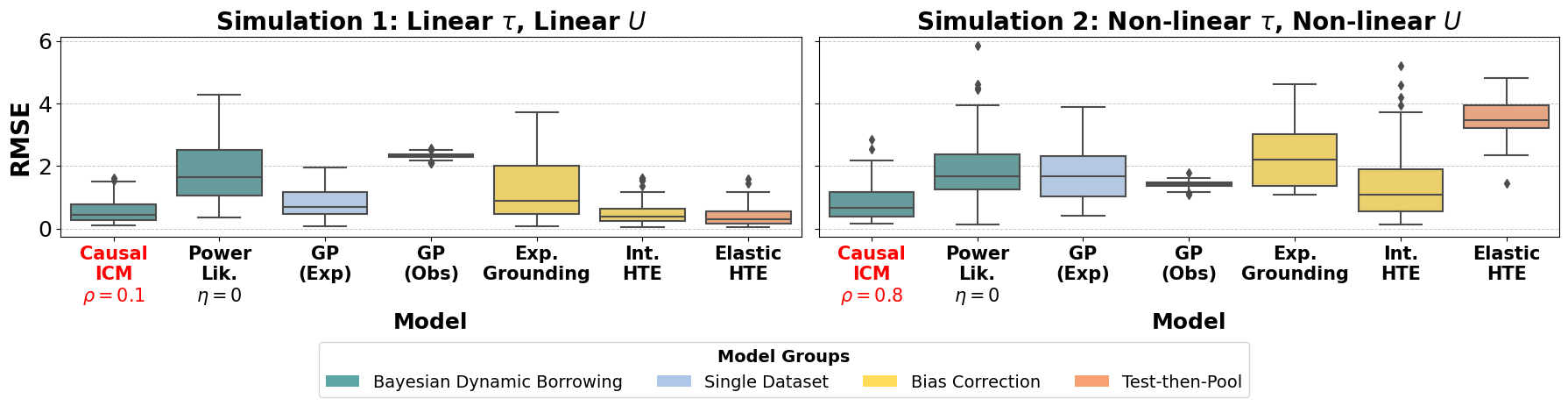}
  \caption{Simulation results over 100 simulated datasets. \textit{Left}: Simulation setting 1: linear \gls{cate} and unobserved confounding, \textit{Right}: Non-linear \gls{cate} and unobserved confounding} 
  \label{model_comp}
\end{figure*}

 \begin{figure}[h!]
  \centering
  \includegraphics[width=\textwidth]{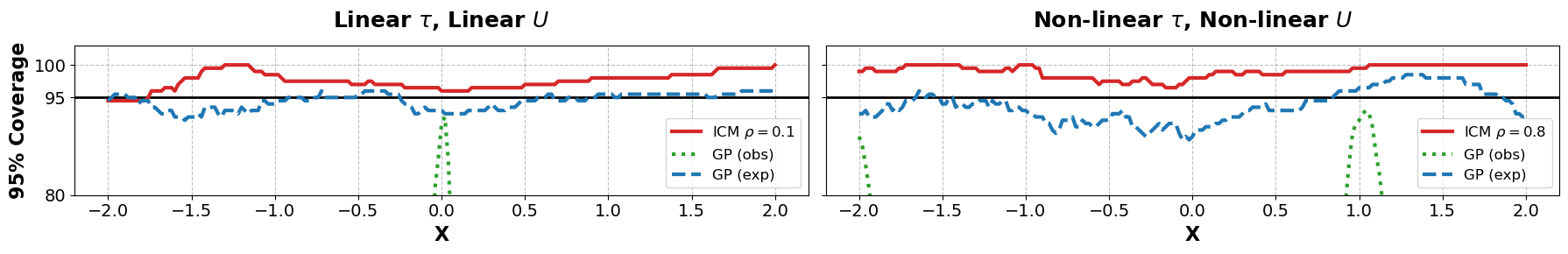}
  \caption{95\% conditional coverage over the covariate distribution of the observational study for the two simulation studies. The black solid line indicates the nominal level (95\%). \textit{Left}: First simulation setting (linear \gls{cate} and confounding), \textit{Right}: Second simulation setting (non-linear \gls{cate} and confounding). The GP (obs) method exhibited 0\% coverage in most of the support covariate distribution.}
  \label{coverage}
\end{figure} 
\vspace{-0.5em}

\subsection{Real-World Data Analysis}
The Tennessee Student/Teacher Achievement Ratio (STAR) \citep{star_data} randomised experiment initiated in 1985 studied the impact of class size on student outcomes through standardized test scores from first to third grade. We focus on two `treatments', here corresponding to two experimental conditions: small class size and regular class size.  We followed a similar approach to \citet{kallus_removing_2018} and used the real data to produce a `confounded' dataset, corresponding to the observational sample, and a smaller unconfounded dataset, which comprised our \gls{rct} data. After removing subjects with missing treatment or outcome values, we were left with a randomised sample of 4218 students. Baseline covariates included gender, race, birth month, birth year, free lunch eligibility and teacher ID. To obtain the `experimental' dataset we sampled randomly only from rural or inner city students. To form the confounded observational dataset, we aggregated all control cases not included in the experimental data. Additionally, for those undergoing treatment, we drew a sample with a down-weighting mechanism applied to individuals whose outcomes fell below the $30^{th}$ percentile. After preprocessing, the unconfounded dataset included 422 students, while the confounded dataset included 2593. Further, 379 students were kept as a validation set. 
We compared \textit{Causal-ICM} (optimal $\rho=0.4$) with a \gls{gp} based T-learner trained either on the experimental or the observational data,the experimental grounding method of \cite{kallus_removing_2018} either with Random Forest or \gls{gp} as base learners and the Integrative HTE method of \cite{yang_improved_2022}. We were unable to include the Elastic HTE estimator (\cite{Yang2023Elastic}) or the power likelihood approach (\cite{Lin2025}), as both methods encountered convergence issues due to the datasets’ high dimensionality and the limited variability in their categorical covariates.

In order to compare \textit{Causal-ICM} with existing methods, we require a notion of ground truth. Given the randomised nature of the study, we assume that an unbiased estimate of the \gls{cate} can be obtained using a doubly robust estimator, which we then treat as the ground truth. To obtain this, we estimated the propensity score and conditional expectation models using the full dataset, prior to splitting it into the confounded and unconfounded samples \citep{saito2020counterfactuala}. We note that, in the absence of the true \gls{cate}, other performance metrics adapted to the counterfactual nature of the evaluation are also available \citep{alaa2019, boyer2023assessing}. The results are summarised in Table \ref{tab:rwd_example}. \textit{Causal-ICM} again performed highly competitively in terms of RMSE, on a par with the experimental grounding method with Random Forests as base learners. 

\begin{table}[h!]
\caption{RMSE values for the RWD example. \textit{Causal-ICM}: our method; GP (exp): T-learner with \glspl{gp} as base learners trained in the unconfounded; GP (obs): T-learner with \glspl{gp} as base learners trained in the confounded data; Experimental grounding (GP): the method proposed by \cite{kallus_removing_2018} with \glspl{gp} as base learners; Experimental grounding (RF): the method proposed by \cite{kallus_removing_2018} with Random Forest as base learners; Integrative HTE: the method proposed by \cite{yang_improved_2022}.}
\label{tab:rwd_example}
\centering
\small
\renewcommand{\arraystretch}{1.0} 
\begin{tabular}{| c | c !{\vrule width 1.5pt} c | c |}
\hline
\textbf{Method} & \textbf{RMSE} & \textbf{Method} & \textbf{RMSE} \\ \Xhline{1.5pt}
\textit{Causal-ICM} ($\rho=0.4$) & \textbf{6.29} & Experimental grounding (GP) & 6.38 \\ \hline
GP (exp) & 6.36 & Experimental grounding (RF) & \textbf{6.27}\\ \hline
GP (obs) & 7.47 & Integrative HTE & 11.84 \\ \hline
\end{tabular}
\end{table}

\section{Discussion}

We have developed \textit{Causal-ICM}, a rank-2 \gls{icm} to combine observational and experimental data to estimate treatment effects for a target population, where the hyperparameter $\rho$ controls the degree of borrowing in an interpretable way. In contrast to existing methods, \textit{Causal-ICM} is highly flexible and does not make assumptions about the functional form of the \gls{cate} or the effect of unobserved confounding. By construction, the method trades a small amount of potential bias from the observational source against gains in precision and external validity, while anchoring inference to the \gls{rct}. This trade-off is further supported by our additional experiments, which show that \textit{Causal-ICM} remains stable under increasing observational sample sizes and continues to perform competitively even when alternative data-fusion approaches deteriorate. By providing accurate estimates of \glspl{hte} together with uncertainty quantification, \textit{Causal-ICM} has the potential to inform more personalised medical decision making. By identifying regions of the population where treatment effects remain highly uncertain, our approach can help guide the design of future trials to improve representativeness and reduce uncertainty to satisfactory levels (\cite{yang2025integratingrctsrwdaiml}).

Despite the inherent difficulty of extrapolation, \textit{Causal-ICM} performs well in these settings, as evidenced by our additional analysis separating performance within and outside the support of the randomized trial, where the gains are particularly pronounced in regions lacking experimental data. \textit{Causal-ICM} also achieves reasonable, if slightly conservative, credible intervals. 
In general, it is challenging to ensure frequentist coverage for Bayesian nonparametric methods.
The slightly conservative empirical coverage of \textit{Causal-ICM} can be explained by the model hedging against disagreement between the experimental and observational signals, with additional uncertainty induced by partial cross-domain borrowing. 
A related limitation of our approach is the sensitivity of uncertainty estimates to the kernel choice and the hyperparameter optimization, itself an intrinsic challenge of \glspl{gp}.

The flexibility of the (multi-task) \gls{gp} framework underlying \textit{Causal-ICM} opens several avenues for future work.  In particular, the current model uses shared kernel hyperparameters across latent functions but more flexible alternatives, such as the Linear Model of Coregionalization (LCM; \cite{alvarez_kernels_2012}), could allow task-specific structure. Other possible extensions include incorporating more than two data sources or treatment arms, and placing a prior on the borrowing parameter $\rho$. More broadly, extending the framework to allow non-Gaussian likelihoods (e.g. for binary or heavy-tailed outcomes) is of significant practical interest, but will likely require non-trivial posterior approximations. 



\bibliography{ref}  

\newpage

\appendix
\section{Identification of \gls{cate} in the RCT} \label{app:identification}

Below we show that the \gls{cate} can be identified using observed data in the \gls{rct}.

\begin{equation*}
\begingroup
\scriptsize
\setlength{\jot}{4pt}
\begin{aligned}
\mathbb{E}[Y^1 - Y^0\mid\mathbf{X}_\tau]
&= \mathbb{E}\!\left[\mathbb{E}[Y^1-Y^0\mid\mathbf{X}_\tau \cup \mathbf{X}_S]\mid\mathbf{X}_\tau\right] \\
&\hfill \text{(tower property)} \\[6pt]
&= \mathbb{E}\!\left[\mathbb{E}[Y^1-Y^0\mid\mathbf{X}_\tau \cup \mathbf{X}_S,S=\mathrm{e}]\mid\mathbf{X}_\tau\right] \\
&\hfill \text{(Assumptions \ref{assump:S_exchangeability})} \\[6pt]
&= \mathbb{E}\!\left[\mathbb{E}[Y^1\mid A=1,\mathbf{X}_\tau \cup \mathbf{X}_S,S=\mathrm{e}]
      - \mathbb{E}[Y^0\mid A=0,\mathbf{X}_\tau \cup \mathbf{X}_S,S=\mathrm{e}]
      \mid \mathbf{X}_\tau\right] \\
&\hfill \text{(Assumptions \ref{assump:A_exchangeability})} \\[6pt]
&= \mathbb{E}\!\left[\mathbb{E}[Y^1\mid A=1,\mathbf{X}_\tau \cup \mathbf{X}_S,S=\mathrm{e}]
 \mid\mathbf{X}_\tau\right]
- \mathbb{E}\!\left[\mathbb{E}[Y^0\mid A=0,\mathbf{X}_\tau \cup \mathbf{X}_S,S=\mathrm{e}]
 \mid\mathbf{X}_\tau\right] \\
&\hfill \text{(linearity of conditional expectation)} \\[6pt]
&= \mathbb{E}\!\left[\mathbb{E}[Y\mid A=1,\mathbf{X}_\tau \cup \mathbf{X}_S,S=\mathrm{e}]
 \mid\mathbf{X}_\tau\right]
- \mathbb{E}\!\left[\mathbb{E}[Y\mid A=0,\mathbf{X}_\tau \cup \mathbf{X}_S,S=\mathrm{e}]
 \mid\mathbf{X}_\tau\right] \\
&\hfill \text{(Assumption \ref{assump:consistency})} \\[6pt]
&= \mathbb{E}[Y\mid A=1,\mathbf{X}_\tau,S=\mathrm{e}]
- \mathbb{E}[Y\mid A=0,\mathbf{X}_\tau,S=\mathrm{e}] \\
&\hfill \text{(tower property and $P(X_S|X_\tau) = P(X_S|X_\tau, S=e)$)}
\end{aligned}
\endgroup
\end{equation*}

\section{Discussion of Assumptions }\label{app:disc_assump4}
Assumption~\ref{assump:S_exchangeability} is weaker than requiring the \gls{rct} to be a random sample from the target population: conditional on $\mathbf{X}_\tau$, it only requires the relevant distribution of selection variables to agree between the trial and the target population. Note that this is strictly weaker than assuming that the \gls{rct} is a random sample from the target population and instead allows for some level of distributional shift. Formally, this means that the \gls{cate} in the target population can be identified as 
\[
\tau(X_\tau) = \mathbb{E}[Y^1 - Y^0|X_\tau] = \mathbb{E}_{X_S|X_\tau}[\mathbb{E}[Y^1 - Y^0|X_\tau, X_S, S=e]]
\]
Thus averaging over $X_S$ given $X_\tau$ in the RCT correctly recovers the \gls{cate}. Importantly, this does not require the marginal distribution of $X_S$ to be identical across studies, and distribution shift on $X_\tau$ (marginally) is allowed.



Although the \gls{cate} can be identified within the \gls{rct} population under assumptions \ref{assump:consistency}-\ref{assump:S_positivity}, this highlights a fundamental limitation of relying on a single data source. The identified estimand, $\tau(X_\tau) = \omega^{(e)}(X_\tau)$, pertains only to the \gls{rct} population and therefore reflects the causal effect among individuals who were eligible for and participated in the trial. In many applied settings, however, the primary scientific or policy objective is to infer treatment effects in a broader target population, which may be better represented by an observational study. Assumption \ref{assump: cond_invar_selection} formalises the conditions under which such transport is possible: within levels of $X_\tau$, the distribution of $X_S$ is invariant between the \gls{rct} and the target population, while still allowing for marginal differences in $X_S$ across the two populations.

\section{Illustrative example}\label{app:ill_example}
Figure \ref{Illustrative example} depicts the differences between single independent \glspl{gp} and multi-task GPs. Independent \glspl{gp} excel in regions with observed data but exhibit suboptimal performance during extrapolation. In contrast, Multitask \glspl{gp} enhance both prediction accuracy and uncertainty quantification in unobserved areas, showcasing superior performance in extrapolation scenarios.

\begin{figure}[h]
 \vskip 0.2in
 \begin{center}
 \centerline{\includegraphics[width=\columnwidth]{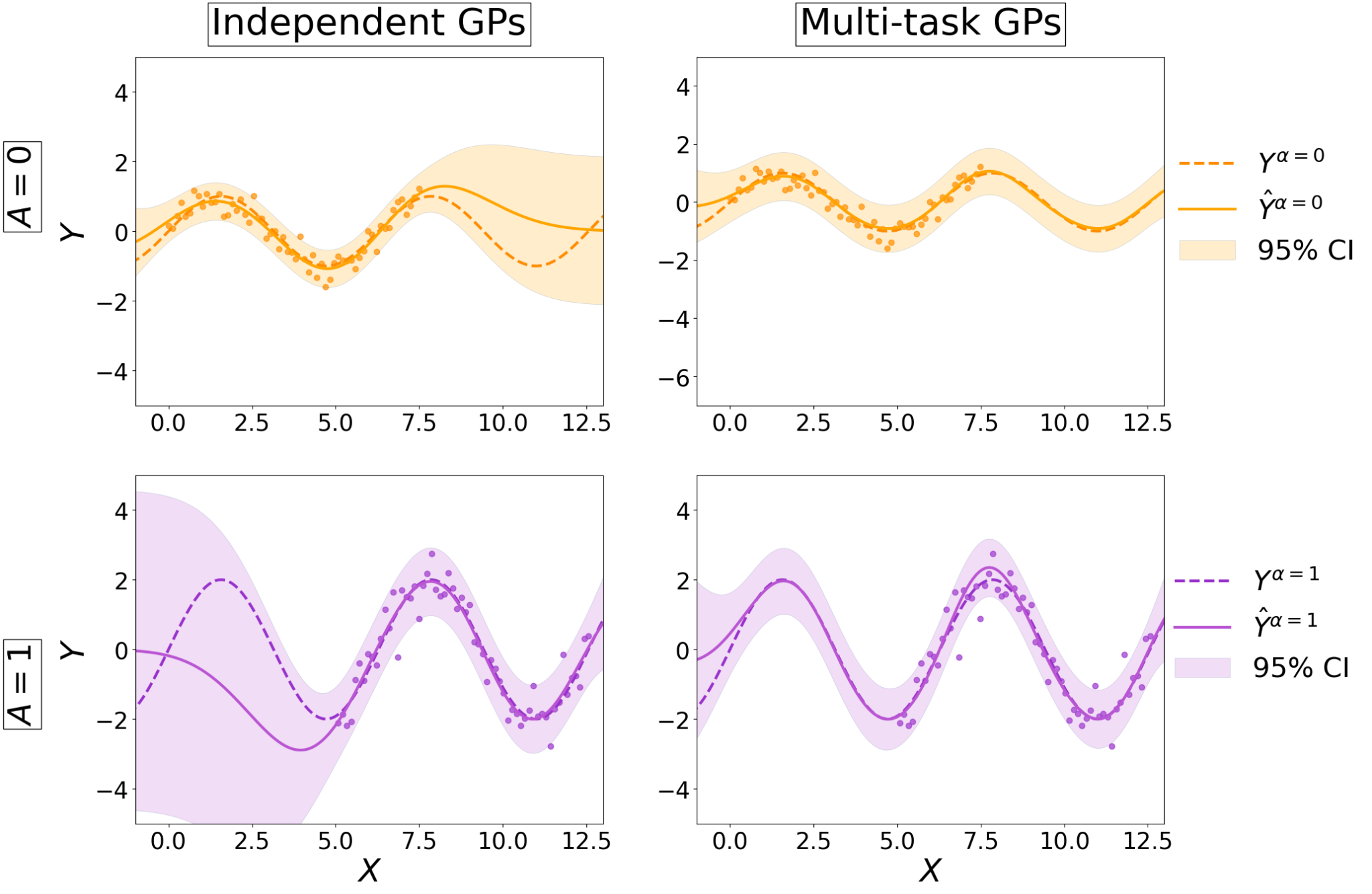}}
 \caption{Comparative illustration between Independent \glspl{gp} and Multitask \gls{gp}. Here, $A$ represents distinct treatment groups, $Y$ corresponds to the outcome, and $X$ denotes baseline covariates.}
 \label{Illustrative example}
 \end{center}
 \vskip -0.2in
 \end{figure}

 \section{Derivation of Posterior Distribution}\label{app:posterior}
The ICM set-up is
\begin{align*}
    f^{\text{e}}\left(\mathbf{x}\right) &= \alpha_1^{\text{e}} u_1(\mathbf{x}) + \alpha_2^{\text{e}}u_2(\mathbf{x})\\
       f^{\text{o}}\left(\mathbf{x}\right) &= \alpha_1^{\text{o}} u_1(\mathbf{x}) + \alpha_2^{\text{o}}u_2(\mathbf{x})
\end{align*}
where $u_i(\mathbf{x}) \sim \mathcal{GP}\left(0, k(\mathbf{x},\mathbf{x})\right)$ independently. Suppose we observe $(\mathbf{y}^{\text{e}}, X^{\text{e}})$ and $(\mathbf{y}^{\text{o}}, X^{\text{o}})$, where the dataset is of size $n_{\text{e}}$ and $n_{\text{o}}$ respectively. The observational model is
\begin{align*}
    y^{\text{e}}_i = f^{\text{e}}(\mathbf{x}_i) + \epsilon_i^{\text{e}}\\
     y^{\text{o}}_i = f^{\text{o}}(\mathbf{x}_i) + \epsilon_i^{\text{o}}
\end{align*}
where both $\epsilon^{\text{e}},\epsilon^{\text{o}}$ arise independently from $\mathcal{N}(0,\sigma^2)$.\\

We clearly have $\mathbb{E}\left[y_i^{\text{e}}\right] = \mathbb{E}\left[y_i^{\text{o}}\right] = 0$, where the expectation is over both the observation noise and the GP prior. The variance is more interesting. It is not difficult to show that
\begin{align*}
    \mathbb{E}\left[y^{\text{e}}_i y^{\text{e}}_j\right] &= \begin{cases}
    \beta^{\text{e}} k(\mathbf{x}^{\text{e}}_i, \mathbf{x}^{\text{e}}_j) + \sigma^2\quad &\text{if } i = j\\
\beta^{\text{e}}k(\mathbf{x}^{\text{e}}_i, \mathbf{x}^{\text{e}}_j)\quad &\text{otherwise}
    \end{cases}
    \\
\end{align*}
Similarly, we have
\begin{align*}
    \mathbb{E}\left[y^{\text{o}}_i y^{\text{o}}_j\right] &= \begin{cases}
\beta^{\text{o}}k(\mathbf{x}^{\text{o}}_i, \mathbf{x}^{\text{o}}_j)+ \sigma^2\quad &\text{if } i = j\\
          \beta^{\text{o}}k(\mathbf{x}^{\text{o}}_i, \mathbf{x}^{\text{o}}_j) \quad &\text{otherwise}
    \end{cases}
\end{align*}
Finally, we have
\begin{align*}
    \mathbb{E}\left[y^{\text{e}}_i y^{\text{o}}_j\right] &= 
    \beta^{\text{eo}}k(\mathbf{x}^{\text{e}}_i, \mathbf{x}^{\text{o}}_j).
\end{align*}
Consider a new test point $\mathbf{x}_*$, and we are interested in $f^{\text{e}}(\mathbf{x}_*)$ and $f^{\text{o}}(\mathbf{x}_*)$. One can also show that $\mathbb{E}\left[f^{\text{e}}(\mathbf{x}_*)\right] = \mathbb{E}\left[f^{\text{o}}(\mathbf{x}_*)\right] =  0$, and 
\begin{align*}
    \mathbb{E}\left[f^{\text{e}}(\mathbf{x}_*) \, y_i^{\text{e}}\right]&=\beta^{\text{e}}k(\mathbf{x}^{\text{e}}_i, \mathbf{x}_*) \\
        \mathbb{E}\left[f^{\text{o}}(\mathbf{x}_*) y_i^{\text{o}}\right]&= \beta^{\text{o}}k(\mathbf{x}^{\text{o}}_i, \mathbf{x}_*)\\
            \mathbb{E}\left[f^{\text{e}}(\mathbf{x}_*) y_i^{\text{o}}\right]&= \beta^{\text{eo}}k(\mathbf{x}^{\text{o}}_i, \mathbf{x}_*)\\ \mathbb{E}\left[f^{\text{o}}(\mathbf{x}_*) y_i^{\text{e}}\right]&=\beta^{\text{eo}}k(\mathbf{x}^{\text{e}}_i, \mathbf{x}_*) 
\end{align*}
We now write this in vector form. Let us define
\begin{align*}
    \mathbf{y} = \begin{bmatrix}\mathbf{y}^{\text{e}}\\\mathbf{y}^{\text{o}}\end{bmatrix}
\end{align*}
which is of length $n_{\text{e}} + n_{\text{o}}$. We can similarly define $X = (X^{\text{e}},X^{\text{o}})$ which is a  $(n_{\text{e}} + n_{\text{o}}) \times p$ matrix.
We then write
\begin{align*}
\mathbf{y}\sim \mathcal{N}\left(\mathbf{0},K(X,X) + \sigma^2 I\right)
\end{align*}
where
\begin{align*}
 K(X,X) = 
    \begin{bmatrix}
\beta^{\text{e}}K(X^{\text{e}},X^{\text{e}}) &\beta^{\text{eo}}K(X^{\text{e}},X^{\text{o}})\\
\beta^{\text{eo}}K(X^{\text{o}},X^{\text{e}}) &\beta^{\text{o}}K(X^{\text{o}},X^{\text{o}})
    \end{bmatrix}
\end{align*}

If we similarly write
$\mathbf{f}(\mathbf{x}_*) = (f^{\text{e}}(\mathbf{x}_*), f^{\text{o}}(\mathbf{x}_*) )$, then we have
\begin{align*}
    \begin{bmatrix}\mathbf{y}\\
    \mathbf{f}(\mathbf{x}_*)\end{bmatrix} \sim \mathcal{N}\left(\mathbf{0},\begin{bmatrix}
        K(X,X)+ \sigma^2 I & K(X,\mathbf{x}_*) \\
        K(\mathbf{x}_*,X) & K(\mathbf{x}_*,\mathbf{x}_*)
    \end{bmatrix}\right)
\end{align*}
where
\begin{align*}
   K(\mathbf{x}_*,\mathbf{x}_*) &= B \otimes k(\mathbf{x}_*,\mathbf{x}_*)\\
   K(X,\mathbf{x}_*) &= \begin{bmatrix}
       \beta^{\text{e}}k(\mathbf{x}_*, X^{\text{e}}) & \beta^{\text{eo}}k(\mathbf{x}_*, X^{\text{e}})\\
      \beta^{\text{eo}}k(\mathbf{x}_*, X^{\text{o}}) &\beta^{\text{o}}k(\mathbf{x}_*, X^{\text{o}}).
   \end{bmatrix}
\end{align*}
To clarify, $ K(X,\mathbf{x}_*)$ is a $(n^{\text{o}}+n^{\text{e}}) \times 2$ matrix.\\

One can then use the usual conditional of a Gaussian distribution, which shows that
\begin{align*}
    \mathbf{f}(\mathbf{x}_*) \mid X, \mathbf{y} \sim \mathcal{N}\left(m_{\mathcal{D}}\left(\mathbf{x}_*\right), K_{\mathcal{D}}\left(\mathbf{x}_*, \mathbf{x}_*\right)\right)
\end{align*}
where
\begin{align*}
    K_{\mathcal{D}}\left(\mathbf{x}_*, \mathbf{x}_*\right) = K\left(\mathbf{x}_*, \mathbf{x}_*\right) -  
    K(\mathbf{x}_*,X)\left[K(X,X) + \sigma^2 I\right]^{-1} K(X,\mathbf{x}_*).
\end{align*}
The posterior mean can similarly be defined as
\begin{align*}\mathbf{m}_{\mathcal{D}}\left(\mathbf{x}_*\right) = K(\mathbf{x}_*,X)\left[K(X,X) + \sigma^2 I\right]^{-1}\mathbf{y}.
\end{align*}

\section{Interpreting $\rho$} \label{app:interpretation_rho}
For further intuition, $\rho$ can be interpreted as a measure of codependence between $f^{\text{e}}$ and $f^{\text{o}}$, as formalised by the following result.
\begin{proposition}\label{prop:rho_1}
    We have $\rho = 1$ if and only if
    \begin{align*}
  \alpha_1^{\emph{e}}\alpha_2^{\emph{o}} =\alpha_1^{\emph{o}}\alpha_2^{\emph{e}}
    \end{align*}
\end{proposition}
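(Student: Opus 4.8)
The plan is to identify $\rho$ with the between-task correlation encoded in the coregionalisation matrix $B$ of \eqref{eq:coreg}, and then reduce the claimed equivalence to a statement about $\det B$. At any common input $\mathbf{x}$ the ICM construction gives $\mathrm{Cov}(f^{\text{e}}(\mathbf{x}), f^{\text{o}}(\mathbf{x})) = \beta^{\text{eo}}\,k(\mathbf{x},\mathbf{x})$ and $\mathrm{Var}(f^{s}(\mathbf{x})) = \beta^{s}\,k(\mathbf{x},\mathbf{x})$, so the prior correlation of the two tasks is $\rho = \beta^{\text{eo}}/\sqrt{\beta^{\text{e}}\beta^{\text{o}}}$, independent of the kernel. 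Squaring gives $\rho^2 = (\beta^{\text{eo}})^2/(\beta^{\text{e}}\beta^{\text{o}})$, so $|\rho| = 1$ if and only if $\beta^{\text{e}}\beta^{\text{o}} - (\beta^{\text{eo}})^2 = \det B = 0$.

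The key step is the factorisation $B = M M^{\top}$, where
\begin{align*}
M = \begin{bmatrix} \alpha_1^{\text{e}} & \alpha_2^{\text{e}} \\ \alpha_1^{\text{o}} & \alpha_2^{\text{o}} \end{bmatrix}
\end{align*}
is the matrix of ICM coefficients; comparing $M M^{\top}$ entrywise with \eqref{eq:coreg} confirms this immediately. It follows that $\det B = (\det M)^2 = (\alpha_1^{\text{e}}\alpha_2^{\text{o}} - \alpha_2^{\text{e}}\alpha_1^{\text{o}})^2$, a perfect square. Hence $\det B = 0$ if and only if $\det M = 0$, i.e. if and only if $\alpha_1^{\text{e}}\alpha_2^{\text{o}} = \alpha_2^{\text{e}}\alpha_1^{\text{o}}$. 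Equivalently, one could expand the two-term Lagrange identity $\beta^{\text{e}}\beta^{\text{o}} - (\beta^{\text{eo}})^2 = (\alpha_1^{\text{e}}\alpha_2^{\text{o}} - \alpha_2^{\text{e}}\alpha_1^{\text{o}})^2$ directly, but the factorisation makes the determinant computation routine. Combined with the reduction above, this establishes the equivalence up to the sign of $\rho$.

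The main obstacle is precisely this sign. The determinant condition is equivalent to $|\rho| = 1$, so it a priori admits both $\rho = 1$ and $\rho = -1$; for instance $(\alpha_1^{\text{e}},\alpha_2^{\text{e}}) = (1,0)$ together with $(\alpha_1^{\text{o}},\alpha_2^{\text{o}}) = (-1,0)$ satisfies the determinant condition yet yields $\rho = -1$. To isolate $\rho = 1$ one must invoke the positive-borrowing convention under which $\beta^{\text{eo}} = \alpha_1^{\text{e}}\alpha_1^{\text{o}} + \alpha_2^{\text{e}}\alpha_2^{\text{o}} \geq 0$, so that $\rho$ takes the positive root $+1$. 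This is the one place where the statement genuinely needs the implicit non-negativity of the correlation rather than pure algebra; the determinant computation itself presents no real difficulty.
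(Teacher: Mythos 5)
Your proof is correct and takes essentially the same route as the paper: the paper also identifies $\rho^2 = (\beta^{\text{eo}})^2/(\beta^{\text{e}}\beta^{\text{o}})$ and reduces the claim to the identity $\beta^{\text{e}}\beta^{\text{o}} - (\beta^{\text{eo}})^2 = \left(\alpha_1^{\text{e}}\alpha_2^{\text{o}} - \alpha_1^{\text{o}}\alpha_2^{\text{e}}\right)^2$, which is exactly your $\det B = (\det M)^2$ in different clothing. Your remark about the sign is a fair observation -- the paper's proof, working with $1-\rho^2$, likewise only pins down $\rho^2 = 1$ and implicitly relies on the convention $\beta^{\text{eo}} \geq 0$ (indeed $\rho \in (0,1)$ throughout the paper) to exclude $\rho = -1$.
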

\begin{proof}
    It is easier to work with $1-\rho^2$, where plugging in the values from (Equation 1) gives
    \begin{align*}
      1-\rho^2 = \frac{\beta^{\text{e}}\beta^{\text{o}} - \left(\beta^{\text{eo}}\right)^2}{\beta^{\text{e}}\beta^{\text{o}}}  = \frac{\left(\alpha_1^{\text{e}}\alpha_2^{\text{o}} - \alpha_1^{\text{o}}\alpha_2^{\text{e}}\right)^2}{\beta^{\text{e}} \beta^{\text{o}}}
    \end{align*}
   The denominator is positive and finite, so the numerator is thus $0$ if and only if $\alpha_1^{\text{e}}\alpha_2^{\text{o}} =\alpha_1^{\text{o}}\alpha_2^{\text{e}}$.
\end{proof}
Assuming non-zero coefficients, this is equivalent to the condition $\alpha_1^{\text{e}}/\alpha_1^{\text{o}} = \alpha_2^{\text{e}}/\alpha_2^{\text{o}}$, which implies that $f^{\text{e}}$ is a scalar multiple of $f^{\text{o}}$. It is thus intuitive that this results in $\rho = 1$, as learning about $f^{\text{o}}$ is equivalent to learning about $f^{\text{e}}$. Finally, another useful observation is that if a single coefficient is 0, we will have $\rho < 1$ as long as all other coefficients are non-zero.

\section{Proof of Proposition 3.1} \label{app: proof_prop}

Consider first updating the GP with the experimental data points. This gives us
\begin{align*}
    \mathbf{f}(\mathbf{x}_*) \mid X^{\text{e}}, \mathbf{y}^{\text{e}} \sim \mathcal{N}\left(m_{\mathcal{D}_{\textnormal{e}}}\left(\mathbf{x}_*\right), K_{\mathcal{D}_{\textnormal{e}}}\left(\mathbf{x}_*, \mathbf{x}_*\right)\right)
\end{align*}
where
\begin{align*}
    K_{\mathcal{D}_{\textnormal{e}}}\left(\mathbf{x}_*, \mathbf{x}_*\right) &= K\left(\mathbf{x}_*, \mathbf{x}_*\right)-  
   \begin{bmatrix}
       \beta^{\text{e}}k(\mathbf{x}_*, X^{\text{e}})&\beta^{\text{eo}}k(\mathbf{x}_*, X^{\text{e}})
\end{bmatrix}^T\Sigma_{\text{e}}^{-1}\begin{bmatrix}
       \beta^{\text{e}}k(\mathbf{x}_*, X^{\text{e}})&\beta^{\text{eo}}k(\mathbf{x}_*, X^{\text{e}})
   \end{bmatrix}
\end{align*}
where
\begin{align*}
    \Sigma_{\text{e}} = \beta^{\text{e}}K(X^{\text{e}},X^{\text{e}}) + \sigma^2 I
\end{align*}
and  $K(\mathbf{x}_*,\mathbf{x}_*) = B \otimes k(\mathbf{x}_*,\mathbf{x}_*)$.
Note that
$  \begin{bmatrix}
       \beta^{\text{e}}k(\mathbf{x}_*, X^{\text{e}})& \beta^{\text{eo}}k(\mathbf{x}_*, X^{\text{e}})
   \end{bmatrix}$
   is a $n^{\text{e}}\times 2$ matrix. 
The posterior mean can similarly be defined, but is not our focus here. 

We can now simply treat $m_{\mathcal{D}_{\textnormal{e}}}$ and $K_{\mathcal{D}_{\textnormal{e}}}$ as our prior mean and covariance functions respectively, noting that the covariance function can also be written as
\begin{align*}
K_{\mathcal{D}_{\textnormal{e}}}\left(\mathbf{x}_*, \mathbf{x}_*\right) = B \otimes k(\mathbf{x}_*,\mathbf{x}_*) - B' \otimes k'(\mathbf{x}_*,\mathbf{x}_*),
\end{align*}
where 
\begin{align*}
      B' = \begin{bmatrix} 
        (\beta^{\text{e}})^2 & \beta^{\text{e}}\beta^{\text{eo}} \\
        \beta^{\text{e}}\beta^{\text{eo}} & (\beta^{\text{eo}})^2
      \end{bmatrix}, ~~~~~~
      k'(\mathbf{x}_*,\mathbf{x}_*) = k(\mathbf{x}_*, X^{\text{e}})^T {\Sigma_{\text{e}}}^{-1}k(\mathbf{x}_*, X^{\text{e}})
\end{align*}
As in the main paper, let us assume that
\begin{align*}
    \begin{bmatrix}
    \alpha_1^{\text{e}} &  \alpha_2^{\text{e}}\\
     \alpha_1^{\text{o}} &  \alpha_2^{\text{o}}
    \end{bmatrix} =     \begin{bmatrix}
    1 & 0\\
     \rho &  \sqrt{1-\rho^2}
    \end{bmatrix},
    \end{align*}
    which gives
    \begin{align*}
    B = \begin{bmatrix} 
        1 & \rho\\
        \rho & 1
      \end{bmatrix}, \quad     B' = \begin{bmatrix} 
        1 & \rho\\
        \rho & \rho^2
      \end{bmatrix}.
    \end{align*}
If we now observe $(X^{\text{o}},\mathbf{y}^{\text{o}})$, then we can write the full posterior as
\begin{align*}
    \mathbf{f}(\mathbf{x}_*) \mid X^{\text{e}}, \mathbf{y}^{\text{e}},X^{\text{o}}, \mathbf{y}^{\text{o}} \sim \mathcal{N}\left(m_{\mathcal{D}}\left(\mathbf{x}_*\right), K_{\mathcal{D}}\left(\mathbf{x}_*, \mathbf{x}_*\right)\right).
\end{align*}
where the key is that
\begin{align*}
    K_{\mathcal{D}}\left(\mathbf{x}_*, \mathbf{x}_*\right) = K_{\mathcal{D}_{\textnormal{e}}}\left(\mathbf{x}_*, \mathbf{x}_*\right) -  
    {K}^{\text{o}}_{\mathcal{D}_{\textnormal{e}}}(\mathbf{x}_*,X^{\text{o}})\left[{K}^{\text{o}}_{\mathcal{D}_{\textnormal{e}}}(X^{\text{o}},X^{\text{o}}) + \sigma^2 I\right]^{-1} {K}^{\text{o}}_{\mathcal{D}_{\textnormal{e}}}(X^{\text{o}},\mathbf{x}_*)
\end{align*}
where
\begin{align*}
{K}^{\text{o}}_{\mathcal{D}_{\textnormal{e}}}\left(X^{\text{o}},X^{\text{o}}\right) &=
\beta^{\text{o}} K\left(X^{\text{o}},X^{\text{o}}\right) - (\beta^{\text{eo}})^2 K'\left(X^{\text{o}},X^{\text{o}}\right)
\\
&=K\left(X^{\text{o}},X^{\text{o}}\right) - \rho^2 K'\left(X^{\text{o}},X^{\text{o}}\right)
\end{align*}
and
\begin{align*}
{K}^{\text{o}}_{\mathcal{D}_{\textnormal{e}}}\left(X^{\text{o}},\mathbf{x}^*\right)&= \begin{bmatrix}
    \beta^{\text{eo}}k(\mathbf{x}^*,X^{\text{o}})- \beta^{\text{e}}\beta^{\text{eo}}k'(\mathbf{x}^*,X^{\text{o}}),& \beta^{\text{o}}k(\mathbf{x}^*,X^{\text{o}})- (\beta^{\text{eo}})^2k'(\mathbf{x}^*,X^{\text{o}})
\end{bmatrix}
\\
&= \begin{bmatrix}
    \rho\left(k(\mathbf{x}^*,X^{\text{o}})- k'(\mathbf{x}^*,X^{\text{o}})\right),& k(\mathbf{x}^*,X^{\text{o}})- \rho^2k'(\mathbf{x}^*,X^{\text{o}})
\end{bmatrix}
\end{align*}
where ${K}^{\text{o}}_{\mathcal{D}_{\textnormal{e}}}\left(X^{\text{o}},\mathbf{x}^*\right)$ is a matrix of shape $n^{\text{o}} \times 2$ and ${K}^{\text{o}}_{\mathcal{D}_{\textnormal{e}}}\left(\mathbf{x}^*,X^{\text{o}}\right) = {K}^{\text{o}}_{\mathcal{D}_{\textnormal{e}}}\left(X^{\text{o}},\mathbf{x}^*\right)^T$.

We thus have the posterior variance of $f^{\textnormal{e}}$ as
\begin{align*}
    V^{\text{e}}(\mathbf{x}_*) =& \left(k(\mathbf{x}_*,\mathbf{x}_*) -  k'(\mathbf{x}_*,\mathbf{x}_*) \right)\\
    &- \rho^2\left(k(\mathbf{x}^*,X^{\text{o}})- k'(\mathbf{x}^*,X^{\text{o}})\right)^T\Sigma^{-1}_{\textnormal{eo}}\left(k(\mathbf{x}^*,X^{\text{o}})- k'(\mathbf{x}^*,X^{\text{o}})\right)
\end{align*}
where
\begin{align*}
\Sigma_{\textnormal{eo}} &= {K}^{\text{o}}_{\mathcal{D}_{\textnormal{e}}}\left(X^{\text{o}},X^{\text{o}}\right) + \sigma^2 I\\
&=K\left(X^{\text{o}},X^{\text{o}}\right) -\rho^2 K'\left(X^{\text{o}},X^{\text{o}}\right) + \sigma^2 I.
\end{align*}
The first term is the original posterior covariance given $(X^{\text{e}}, \mathbf{y}^{\text{e}})$, whilst the second term is the reduction in variance due to  $\mathcal{D}_{\textnormal{o}} =(X^{\text{o}}, \mathbf{y}^{\text{o}})$, which we want to control. In other words, we want to upper bound
\begin{align*}
\rho^2\left(k(\mathbf{x}^*,X^{\text{o}})-k'(\mathbf{x}^*,X^{\text{o}})\right)^T\Sigma_{\textnormal{eo}}^{-1}\left(k(\mathbf{x}^*,X^{\text{o}})-k'(\mathbf{x}^*,X^{\text{o}})\right)
\end{align*}
We now want to write the above term as the posterior variance of a GP. We can guess the following solution and write:
\begin{align*}
    V^{\text{e}}(\mathbf{x}_*) =& (1-\rho^2)\left(k(\mathbf{x}_*,\mathbf{x}_*) -  k'(\mathbf{x}_*,\mathbf{x}_*) \right)  + \rho^2 P
\end{align*}
where
\begin{align*}
     P&= \left(k(\mathbf{x}_*,\mathbf{x}_*) -  k'(\mathbf{x}_*,\mathbf{x}_*) \right)\\
     &- \left(k(\mathbf{x}^*,X^{\text{o}})-k'(\mathbf{x}^*,X^{\text{o}})\right)^T\Sigma_{\textnormal{eo}}^{-1}\left(k(\mathbf{x}^*,X^{\text{o}})-k'(\mathbf{x}^*,X^{\text{o}})\right).
\end{align*}
If we can show that $P \geq 0$, then we are done. To see this, note that we can apply Woodbury's matrix identity which gives
\begin{align*}
 \Sigma_{\textnormal{eo}}^{-1} &=  [\, \,\underbrace{(1-\rho^2) K'\left(X^{\text{o}},X^{\text{o}} \right)}_{A} +\underbrace{K\left(X^{\text{o}},X^{\text{o}}\right) -K'\left(X^{\text{o}},X^{\text{o}}\right)+ \sigma^2 I}_{B}\,\,]^{-1} \\
 &= B^{-1} - \underbrace{(B + BA^{-1}B)^{-1}}_{C}
\end{align*}
which gives
\begin{align*}
    P&= \left(k(\mathbf{x}_*,\mathbf{x}_*) -  k'(\mathbf{x}_*,\mathbf{x}_*) \right)- \left(k(\mathbf{x}^*,X^{\text{o}})-k'(\mathbf{x}^*,X^{\text{o}})\right)^TB^{-1}\left(k(\mathbf{x}^*,X^{\text{o}})-k'(\mathbf{x}^*,X^{\text{o}})\right)\\
    &+\underbrace{\left(k(\mathbf{x}^*,X^{\text{o}})-k'(\mathbf{x}^*,X^{\text{o}})\right)^TC \left(k(\mathbf{x}^*,X^{\text{o}})-k'(\mathbf{x}^*,X^{\text{o}})\right)}_{\geq 0}
\end{align*}
To show the final term is positive, we just need to show that $C$ is positive semi-definite which implies $x^TC x\geq 0$ for any vectors $x$. If $A$ and $B$ are positive definite (PD) and symmetric, then so is $A^{-1}$ and $BA^{-1}B$. The sum of two PD matrices is PD, as is the inverse of a PD matrix, so $C$ is PD.
Finally, we see that the remaining first terms in $P$ is simply the regular posterior variance (given $\mathcal{D}_{\textnormal{o}}$) of a GP with kernel $k(\mathbf{x}_*,\mathbf{x}_*) -  k'(\mathbf{x}_*,\mathbf{x}_*)$, which is non-negative.

Putting this together then, we have
\begin{align*}
    V^{\text{e}}\left(\mathbf{x}_*\right) \geq (1-\rho^2)  \,{V}^{\text{e}}_{\mathcal{D}_{\textnormal{e}}}\left(\mathbf{x}_*\right)
\end{align*}
where ${V}^{\text{e}}_{\mathcal{D}_{\textnormal{e}}}$ is the variance conditional on $\mathcal{D}_{\textnormal{e}}=(X^{\text{e}},\mathbf{y}^{\text{e}})$ only.



\section{Confounding function}\label{app:confound_fun}
The hidden confounding function $\eta(\mathbf{x}) = f^{\text{e}}(\mathbf{x}) - f^{\text{o}}(\mathbf{x})$ quantifies the degree to which conditional average treatment effect in the observational study deviates from the conditional average treatment effect in the trial. This may be of independent interest in order to understand the underlying process for treatment assignment or allocation in the real-world outside of the experimental setting. Although our principal focus in the above exposition is on $f^{\text{e}}$, we may also easily obtain a posterior distribution for $\eta(\mathbf{x})$, by considering the joint posterior distribution over $\mathbf{f}(\mathbf{x})$.

\subsection{Variance of confounding function}

Suppose we now want to compute the variance of the confounding function evaluated at $\mathbf{x}_*$. This is given by $\eta(\mathbf{x}_*) = f^{\text{e}}(\mathbf{x}_*) - f^o(\mathbf{x}_*)$, or in matrix notation $\eta(\mathbf{x}_*) = \begin{pmatrix}
\phantom{-}1 \\
-1
\end{pmatrix}^T\mathbf{f}(\mathbf{x}_*)$. By standard properties of the multivariate Gaussian distribution, it follows that
\begin{align*}
    \eta(\mathbf{x}_*) \mid X, \mathbf{y} \sim \mathcal{N} \left( m^{\text{e}}(\mathbf{x}_*) - m^o(\mathbf{x}_*), V_\eta(\mathbf{x}_*) \right),
\end{align*}
where
\begin{align*}
    V_\eta(\mathbf{x}_*) &= \begin{pmatrix}
\phantom{-}1 \\
-1
\end{pmatrix}^T K_{*}\left(\mathbf{x}_*, \mathbf{x}_*\right) \begin{pmatrix}
\phantom{-}1 \\
-1
\end{pmatrix} \\
&= (\beta^{\text{e}} + \beta^o - 2\beta^{eo})k(\mathbf{x}_*,\mathbf{x}_*) \\
&- \begin{pmatrix}
(\beta^{\text{e}} - \beta^{eo})k(\mathbf{x}_*, X^{\text{e}}) \\
(\beta^{eo} - \beta^{o})k(\mathbf{x}_*, X^o)
\end{pmatrix}^T \left[K(X,X) + \sigma^2 I\right]^{-1} \begin{pmatrix}
(\beta^{\text{e}} - \beta^{eo})k(\mathbf{x}_*, X^{\text{e}}) \\
(\beta^{eo} - \beta^{o})k(\mathbf{x}_*, X^o)\\
\end{pmatrix}\\
&= 2(1-\rho)k(\mathbf{x}_*, X^{\text{e}})- \begin{pmatrix}
(1-\rho)k(\mathbf{x}_*, X^{\text{e}}) \\
(\rho - 1)k(\mathbf{x}_*, X^o)
\end{pmatrix}^T \left[K(X,X) + \sigma^2 I\right]^{-1} \begin{pmatrix}
(1-\rho)k(\mathbf{x}_*, X^{\text{e}}) \\
(\rho - 1)k(\mathbf{x}_*, X^o)\\
\end{pmatrix}
\end{align*}

\section{Multivariate simulation studies} \label{app:multi_simulations}

Similar to the univariate case, in the multivariate case we vary the complexity of the confounding and the \gls{cate} function, but we choose to explore only cases with nonlinear potential outcomes functions. 

\underline{Simulation 1}: Nonlinear Potential Outcomes, linear confounding, linear \gls{cate}

Assume that we have five continuous baseline covariate $X_j\sim U\left[-2,2\right]$, where $j=1,\cdots, 5$. The values of $X$ define the trial participation mechanism defined as $S\sim \text{Bernoulli}\left(p_S\right)$ where $p_S = \frac{\text{exp}\left(-10-8X_1-8X_2\right)}{1+\text{exp}\left(-10-8X_1-8X_2\right)}$. After selecting all trial participants we assign them randomly to treatment $A \sim \text{Bernoulli}\left(0.5\right)$. The potential outcomes are generated as $Y\left(a\right) = a\tau\left(X\right) + \sum_{j=1}^5X_j +\epsilon$, for $a=0,1$, where $\tau\left(X\right) = 1 + X_1 +X_2 $ is the \gls{cate} and $\epsilon \sim \mathcal{N}\left(0,1\right)$. For the observational study we have $X_j\sim U\left[-2,2\right]$, where $j=1,\cdots, 5$, and the treatment is generated according to the model $A \sim \text{Bernoulli} \left(e\left(X\right)\right)$ where $\text{logit}\left(e\left(X\right)\right) = -\left(X_1+X_2\right)$. Similarly to the trial the potential outcomes are $Y\left(a\right) = a\tau\left(X\right) + \sum_{j=1}^5X_j + U +\epsilon$, for $a=0,1$. $U$ represents the hidden confounding and we generated as $U \sim \mathcal{N}\left(\left(2A-1\right)\left(X_1+X_2\right), 1\right)$. \\

\underline{Simulation 2}: Nonlinear Potential Outcomes, nonlinear confounding, nonlinear \gls{cate}

Assume that we have five continuous baseline covariate $X_j\sim U\left[-2,2\right]$, where $j=1,\cdots, 5$. The values of $X$ define the trial participation mechanism defined as $S\sim \text{Bernoulli}\left(p_S\right)$ where $p_S = \frac{\text{exp}\left(-10-8X_1-8X_2\right)}{1+\text{exp}\left(-10-8X_1-8X_2\right)}$. After selecting all trial participants we assign them randomly to treatment $A \sim \text{Bernoulli}\left(0.5\right)$. The potential outcomes are generated as $Y\left(a\right) = a\tau\left(X\right) + \sum_{j=1}^5X_j +\epsilon$, for $a=0,1$, where $\tau\left(X\right) = 1 + X_1 + X_1^2 +X_2 + X_2^2$ is the \gls{cate} and $\epsilon \sim \mathcal{N}\left(0,1\right)$. For the observational study we have $X_j\sim U\left[-2,2\right]$, where $j=1,\cdots, 5$, and the treatment is generated according to the model $A \sim \text{Bernoulli} \left(e\left(X\right)\right)$ where $\text{logit}\left(e\left(X\right)\right) = -\left(X_1+X_2\right)$. Similarly to the trial the potential outcomes are $Y\left(a\right) = a\tau\left(X\right) + \sum_{j=1}^5X_j + U +\epsilon$, for $a=0,1$. $U$ represents the hidden confounding and we generated as $U \sim \mathcal{N}\left(\left(2A-1\right)\left(\text{sin}\left(X_1\right)+\text{sin}\left(X_2\right)\right), 1\right)$. \\

\begin{figure*}[h]
  \centering
  \includegraphics[width=\textwidth]{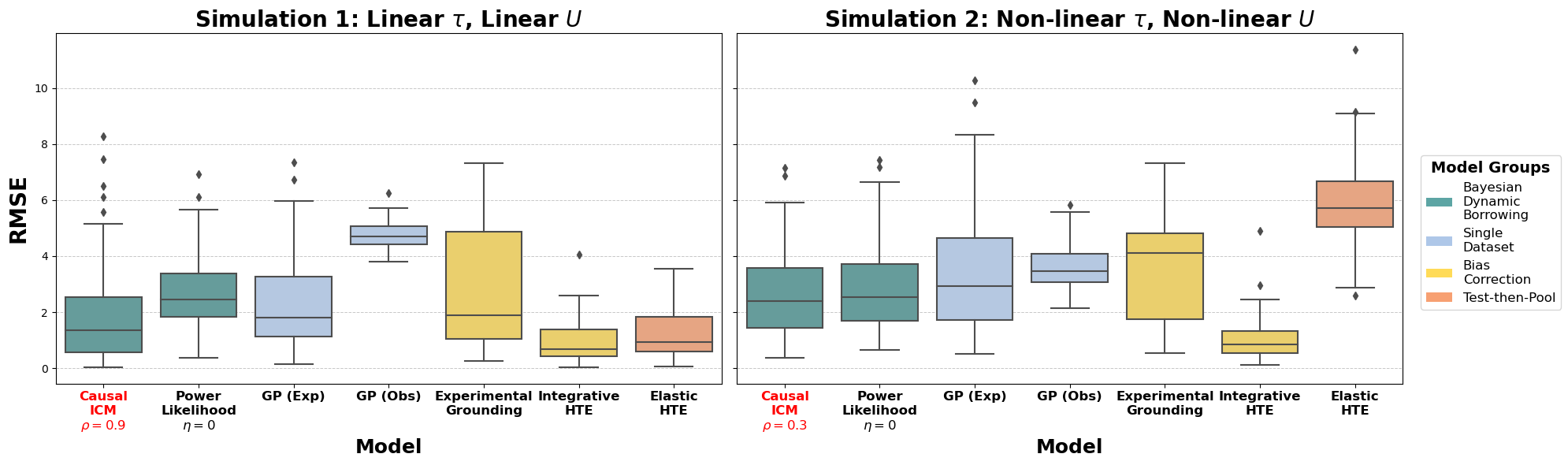}
  \caption{Multivariate simulation results over 100 simulated datasets. \textit{Left}: Simulation setting 1: linear \gls{cate} and unobserved confounding, \textit{Right}: Non-linear \gls{cate} and unobserved confounding} 
  \label{model_comp_mult}
\end{figure*}

Based on Figure \ref{model_comp_mult}, in the first simulation setting, the highest performance is achieved by the Integrative HTE and the Elastic HTE, as expected given that both models are correctly specified. Notably, the performance of \textit{Causal-ICM} is comparable to these benchmarks, demonstrating its competitiveness even when the models are well-specified.

In the second simulation setting, where both the \gls{cate} and the confounding exhibit more complex, non-linear structures, the performance of models relying on correct specification deteriorates. In contrast, \textit{Causal-ICM} maintains robust performance, being outperformed only by the Integrative HTE. Importantly, this superior performance of the Integrative HTE is a consequence of having a correctly specified outcome model; 
In the multivariate simulation settings, parametric competitors can outperform Causal-ICM when their outcome and confounding models are correctly specified. Table~\ref{tab:multivariate-misspecification} shows, however, that under misspecification Causal-ICM becomes more robust and substantially outperforms Integrative HTE in the second multivariate simulation setting.

\begin{table}[h!]
\centering
\caption{RMSE in the second multivariate simulation setting under model misspecification.}
\label{tab:multivariate-misspecification}
\begin{tabular}{lc}
\toprule
Model & Mean RMSE (SD)\\
\midrule
GP (exp) & 3.422 (2.186)\\
GP (obs) & 3.580 (0.796)\\
Experimental Grounding & 3.582 (1.713)\\
Integrative HTE & 21.883 (3.330)\\
Causal-ICM & 2.586 (1.513)\\
\bottomrule
\end{tabular}
\end{table}

\section{Extrapolation with \textit{Causal-ICM}} \label{app:extrapolationPlot}
Below we show the performance of \textit{Causal-ICM} in the first univariate simulation setting, where both the \gls{cate} and the confounding functions are non-linear. The \textit{Causal-ICM} can successfully produce accurate estimates of the \gls{cate} both within and outside the support of the \gls{rct} with inflated uncertainty outside the support.

\begin{figure}[h]
 \vskip 0.2in
 \begin{center}
 \centerline{\includegraphics[width=\columnwidth]{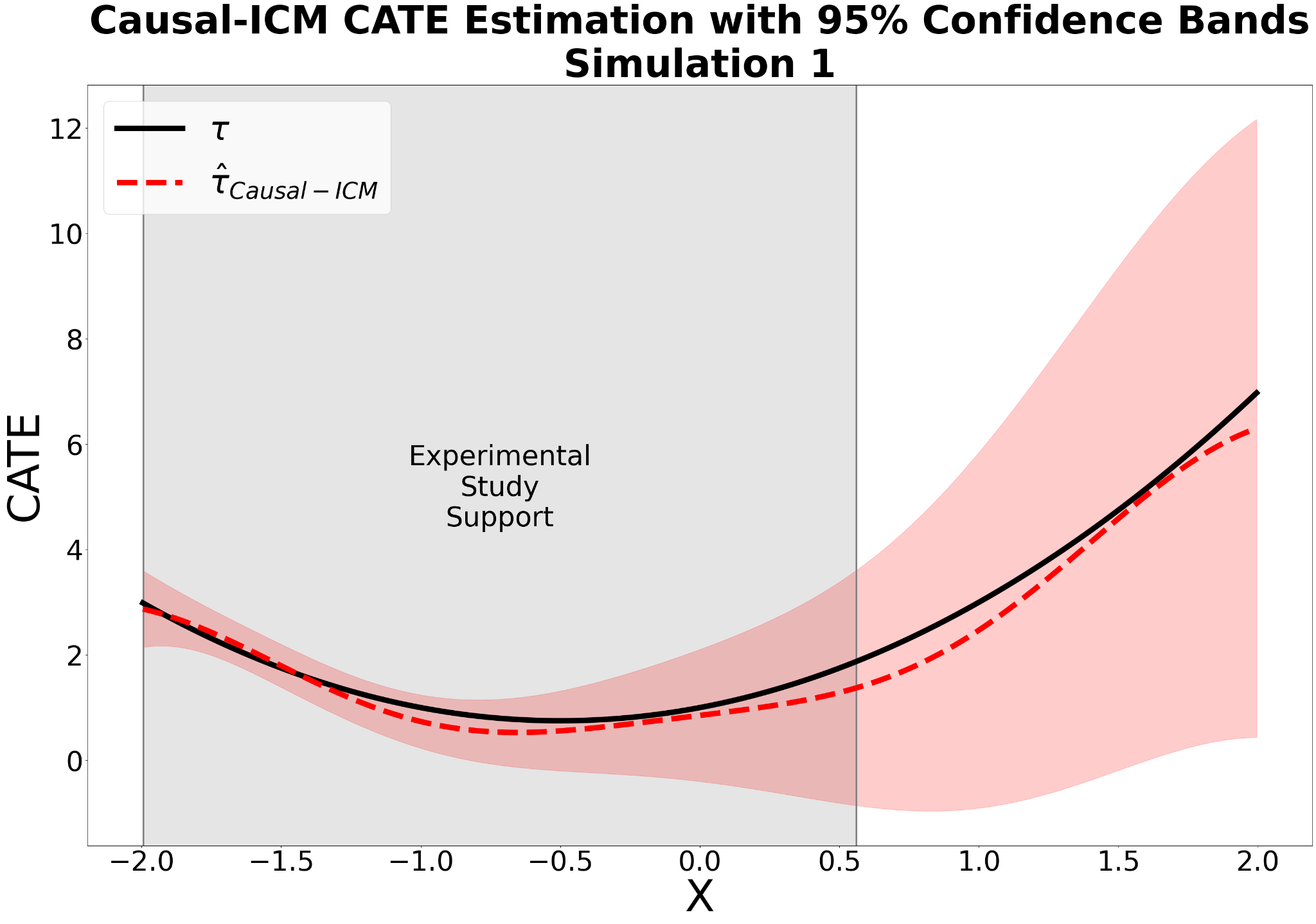}}
 \caption{Estimated CATE under the first univariate simulation setting with nonlinear CATE and confounding functions.}
 \label{extrapolationPlot}
 \end{center}
 \vskip -0.2in
 \end{figure}

\section{Sensitivity to $\rho$, Observational Sample Size, Kernel Choice, and Degree of Overlap; and Performance Within and Outside the RCT Support}
\label{app: sensitivity}

Below we present a more thorough exploration of the effects of different values of $\rho$, differing observational/experimental sample sizes, the choice of kernel and the degree of overlap between the two studies for the second simulation setting of the main text where the \gls{cate} and the confounding function are non-linear.

\begin{table}[h!]
\centering
\begin{tabular}{|c|c|}
\hline
\textbf{$\rho$} & \textbf{RMSE} \\ \hline
0.0 & 2.571 \\ \hline
0.2 & 2.051  \\ \hline
0.4 & 1.502 \\ \hline
0.6 & 0.913 \\ \hline
0.8 & 0.305 \\ \hline
1.0 & 1.095 \\ \hline
\end{tabular}
\caption{RMSE values for different values of $\rho$, Univariate simulation 2}
\label{tab:rmse_rho}
\end{table}

\begin{table}[h!]
\centering
\begin{tabular}{|c|c|}
\hline
\textbf{Kernel} & \textbf{Mean RMSE (SD)} \\ \hline
RBF & 0.822 (0.651) \\ \hline
Matérn 3/2 & 1.315 (0.0556) \\ \hline
Matérn 5/2 & 0.968 (0.583) \\ \hline
\end{tabular}
\caption{Mean RMSE for different kernels, Univariate simulation 2}
\label{tab:rmse_kernel}
\end{table}

\begin{table}[h!]
\centering
\caption{Performance (RMSE) under increasing observational sample size in simulation setting 2. Results are based on 50 simulated datasets.}
\label{tab:sample-imbalance}
\begin{tabular}{lcc}
\toprule
Model & Observational sample size & Mean RMSE (SD)\\
\midrule
Causal-ICM (ours) & 200 & 1.078 (0.810)\\
Integrative HTE & 200 & 1.080 (0.441)\\
Power Likelihood & 200 & 1.023 (0.559)\\
Experimental Grounding & 200 & 1.691 (0.597)\\
\gls{gp} Observational & 200 & 1.512 (0.226)\\
\midrule
Causal-ICM (ours) & 500 & 1.071 (0.652)\\
Integrative HTE & 500 & 1.406 (0.461)\\
Power Likelihood & 500 & 1.231 (0.667)\\
Experimental Grounding & 500 & 1.646 (0.547)\\
\gls{gp} Observational & 500 & 1.475 (0.146)\\
\midrule
Causal-ICM (ours) & 1000 & 0.929 (0.695)\\
Integrative HTE & 1000 & 1.546 (0.303)\\
Power Likelihood & 1000 & 0.975 (0.506)\\
Experimental Grounding & 1000 & 1.652 (0.452)\\
\gls{gp} Observational & 1000 & 1.414 (0.128)\\
\midrule
Causal-ICM (ours) & 2000 & 0.934 (0.674)\\
Integrative HTE & 2000 & 1.576 (0.318)\\
Power Likelihood & 2000 & 1.161 (0.606)\\
Experimental Grounding & 2000 & 1.660 (0.443)\\
\gls{gp} Observational & 2000 & 1.377 (0.062)\\
\bottomrule
\end{tabular}
\end{table}

\begin{table}[h!]
\centering
\begin{tabular}{|c|c|c|}
\hline
\textbf{Overlap} & Optimal $\rho$ &  \textbf{Mean RMSE (SD)} \\ \hline
Full & 0.5 & 0.264 (0.088)\\ \hline
High & 0.4 & 0.444 (0.237) \\ \hline
Low & 0.8 & 0.941 (0.607) \\ \hline
\end{tabular}
\caption{Mean RMSE for different degrees of overlap between the \gls{rct} and observational study for the second simulation setting of the main text (Univariate simulation 2); The degree of overlap is controlled through the coefficients of the model for study participation; Results obtained over 100 different simulated datasets.}
\label{tab:rmse_overlap}
\end{table}

The results above highlight the importance of $\rho$ in controlling the influence of unobserved 
confounding: as $\rho$ increases from 0 to 0.8, the RMSE decreases steadily, reflecting 
a stronger borrowing of information from the observational study, before rising again at 
$\rho = 1.0$ where the confounding is no longer accounted for. Across different kernel 
choices, \textit{Causal-ICM} retains strong and consistent performance, with the RBF kernel 
achieving the lowest mean RMSE, suggesting some robustness to this modelling choice, 
though the differences are moderate.

Table~\ref{tab:sample-imbalance} reports results under increasing observational sample size 
while keeping the experimental sample size fixed. \textit{Causal-ICM} remains competitive 
as the observational sample grows substantially larger than the \gls{rct}, and in fact 
improves with larger observational samples, in contrast to \textit{Integrative HTE} whose 
performance degrades noticeably. The performance of the experimental grounding method of \cite{kallus_removing_2018}, the power likelihood of \cite{Lin2025}, and the T-learner trained on observational data does not seem to deteriorate with the increasing observational sample size, but it remains inferior to our method. Finally, Table~\ref{tab:rmse_overlap} shows that 
\textit{Causal-ICM} performs well even under low overlap between the two studies, 
with optimal results obtained under full overlap, as expected. Taken together, these 
results suggest that \textit{Causal-ICM} is robust across a range of practical conditions 
that may arise when combining experimental and observational data.

To complement the aggregate RMSE results in the main text, Table~\ref{tab:in-out-support} 
reports mean squared error separately inside and outside the support of the \gls{rct} 
for simulation setting~2.

\begin{table}[h!]
\centering
\caption{Mean in-sample and out-of-sample MSE in simulation setting 2. In-sample refers 
to the support of the randomized trial; out-of-sample refers to regions outside the trial 
support.}
\label{tab:in-out-support}
\begin{tabular}{lcc}
\toprule
Method & Mean in-sample MSE (SD) & Mean out-of-sample MSE (SD)\\
\midrule
GP exp & 15.302 (15.595) & 9.645 (13.259)\\
GP obs & 343.387 (436.318) & 310.823 (412.965)\\
Kallus GP & 3.048 (3.141) & 43.461 (42.069)\\
Integrative HTE & 0.307 (0.114) & 3.153 (0.919)\\
Causal-ICM & 0.252 (0.099) & 1.071 (0.722)\\
\bottomrule
\end{tabular}
\end{table}

\textit{Causal-ICM} achieves the lowest in-sample MSE among all methods, and most 
strikingly, retains strong performance outside the \gls{rct} support, where competing 
methods deteriorate considerably. In particular, \textit{Kallus GP} degrades sharply 
out-of-sample despite performing reasonably well within the trial support, suggesting 
it does not generalise well beyond the experimental region. \textit{Causal-ICM}, by 
contrast, benefits from the broader covariate coverage of the observational study to 
extrapolate more reliably, underscoring one of the key practical advantages of 
integrating the two data sources.

\section{Runtime results}
\label{app:runtime}
Training multi-task Gaussian processes scales cubically in the total sample size, although the intrinsic coregionalization structure can often be exploited to improve efficiency in practice. For larger datasets, sparse GP approximations with inducing points provide a natural route to scalability.

\begin{table}[h!]
\centering
\caption{Runtime performance in seconds for simulation setting 2, based on 100 datasets.}
\label{tab:runtime}
\begin{tabular}{lccccc}
\toprule
Method & Mean (s) & SD (s) & Median (s) & Min (s) & Max (s)\\
\midrule
GP exp & 0.244369 & 0.114056 & 0.236458 & 0.089087 & 0.489595\\
GP obs & 2.232287 & 0.639261 & 2.077287 & 1.417782 & 6.207515\\
Experimental Grounding GP & 2.394518 & 0.951331 & 2.181754 & 1.566869 & 10.213725\\
Causal-ICM & 4.559152 & 0.782487 & 4.513439 & 3.310538 & 8.185129\\
Integrative HTE & 6.006100 & 0.416126 & 5.964500 & 5.222000 & 7.494000\\
\bottomrule
\end{tabular}
\end{table}

\end{document}